\documentclass[a4paper,UKenglish,cleveref, autoref, thm-restate]{lipics-v2021}

\pdfoutput=1 
\hideLIPIcs

\usepackage{graphicx} 
\usepackage{amsmath, amssymb}
\usepackage{amsthm}
\usepackage{url}
\usepackage{hyperref}
\usepackage{caption}
\usepackage{makeidx}
\usepackage{mathtools}
\usepackage{todonotes}
\usepackage{makecell}
\usepackage{rotating}
\usepackage{colortbl}
\usepackage{xcolor}

\colorlet{MediumRed}{red!80!white}
\colorlet{MediumBlue}{blue!65!white}
\colorlet{MediumGreen}{blue!55!green}

\usepackage[ruled, vlined, linesnumbered]{algorithm2e}

\usepackage{tikz}
\usetikzlibrary{quotes}
\tikzstyle{vertex}=[circle, draw, inner sep=0pt, minimum size=6pt]
\newcommand{\vertex}{\node[vertex]}

\usetikzlibrary{arrows.meta, decorations.markings}
\usetikzlibrary{shapes.geometric}
\usetikzlibrary{trees, positioning}

\tikzset{
    mynode/.style={
        draw,
        ellipse,
        font=\sffamily\small,
        minimum width=2.5em,
        minimum height=1.4em,
        inner sep=0pt,
        align=center,
        thick
    }
}

\usepackage{xspace}
\newcommand{\A}{\ensuremath{\texttt{A}}}
\newcommand{\C}{\ensuremath{\texttt{C}}}
\newcommand{\G}{\ensuremath{\texttt{G}}}
\newcommand{\T}{\ensuremath{\texttt{T}}}
\newcommand{\xx}{\cellcolor{lightgray}}

\newcommand{\greedyNeckAlgo}{\textsc{necklaceCover}\xspace}
\newcommand{\implementedNeckCover}{\texttt{matchNecklaceCover}\xspace}
\newcommand{\greedyBaseline}{\texttt{greedyBaseline}\xspace}

\newcommand{\gp}[1]{{#1}}
\newcommand{\rg}[1]{}
\newcommand{\cpm}[1]{}

\newcommand{\dBGedge}[1]{\ensuremath{\rightarrow_{\texttt{#1}}}\xspace}

\title{Variations on the Problem of Identifying Spectrum-Preserving String Sets} 

\author{Sankardeep Chakraborty}{University of Tokyo}{sankardeep.chakraborty@gmail.com}{https://orcid.org/0000-0002-2395-4160}{}
\author{Roberto Grossi}{University of Pisa}{roberto.grossi@unipi.it}{https://orcid.org/0000-0002-7985-4222}{}
\author{Ren Kimura}{University of Tokyo}{renkimura@g.ecc.u-tokyo.ac.jp}{}{}
\author{Giulia Punzi}{University of Pisa}{giulia.punzi@unipi.it}{https://orcid.org/0000-0001-8738-1595}{} 
\author{Kunihiko Sadakane}{University of Tokyo}{sada@mist.i.u-tokyo.ac.jp}{https://orcid.org/0000-0002-8212-3682}{}
\author{Wiktor Zuba}{University of Warsaw}{w.zuba@mimuw.edu.pl}{https://orcid.org/0000-0002-1988-3507}{}

\authorrunning{S. Chakraborty, R. Grossi, R. Kimura, G. Punzi, K. Sadakane, and W. Zuba} 

\Copyright{S. Chakraborty, R. Grossi, R. Kimura, G. Punzi, K. Sadakane, W. Zuba} 

\ccsdesc[500]{Theory of computation~Graph algorithms analysis}
\ccsdesc[500]{Theory of computation~Data structures design and analysis}

\keywords{Pangenome graphs, K-mer representation, Graph algorithms, NP-hardness, Path Cover} 

\category{} 

\relatedversion{} 

\acknowledgements{Work done while RG, GP, and WZ visited the University of Tokyo under the EU PANGAIA project (European Union’s Horizon 2020 Research and Innovation Staff Exchange programme under the Marie Skłodowska-Curie grant agreement No. 872539), and RK visited the University of Pisa.
GP is supported by the Italian Ministry of Research, under the complementary actions to the NRRP “Fit4MedRob - Fit for Medical Robotics” Grant (\# PNC0000007)
}

\EventEditors{John Q. Open and Joan R. Access}
\EventNoEds{2}
\EventLongTitle{42nd Conference on Very Important Topics (CVIT 2016)}
\EventShortTitle{CVIT 2016}
\EventAcronym{CVIT}
\EventYear{2016}
\EventDate{December 24--27, 2016}
\EventLocation{Little Whinging, United Kingdom}
\EventLogo{}
\SeriesVolume{42}
\ArticleNo{23}

\begin{document}

\nolinenumbers 
\maketitle

\begin{abstract}
In computational genomics, many analyses rely on efficient storage and traversal of $k$-mers, motivating compact representations such as spectrum-preserving string sets (SPSS), which store strings whose $k$-mer spectrum matches that of the input. Existing approaches, including Unitigs, Eulertigs and Matchtigs, model this task as a path cover problem on the deBruijn graph. 
We extend this framework from paths to branching structures by introducing necklace covers, which combine cycles and tree-like attachments (pendants). We present a greedy algorithm that constructs a necklace cover while guaranteeing, under certain conditions, optimality in the cumulative size of the final representation. 

Experiments on real genomic datasets indicate that the minimum necklace cover achieves smaller representations than Eulertigs and comparable compression to the Masked Superstrings approach, while maintaining exactness of the $k$-mer spectrum.
\end{abstract}

\setcounter{page}{0}
\newpage

\section{Introduction}
\label{section:intro}

In modern computational genomics, many downstream analyses, such as read mapping and variant calling, reference-guided assembly, and metagenomic screening, begin by locating short exact matches, called $k$-mers, to a reference genome before performing heavier inference. To support these atomic queries, one wants a representation of $k$-mers that is compact, fast to traverse, and easy to interface with existing tooling. \textit{Spectrum-preserving string sets} (SPSS) meet these requirements by storing a set of strings whose $k$-mers match those of the input and using a smaller number of characters, when possible~\cite{rahman2021disk, rahman2021representation, sladky2023masked}. In particular, no-repetition SPSS (also known as simplitigs~\cite{bvrinda2021simplitigs}) keeps each $k$-mer exactly once, avoiding duplication overhead in memory and simplifying indexing pipelines. 

SPSS representations are used for disk compression, static $k$-mer membership indices, and as a basis for the Spectral Burrows-Wheeler Transform (SBWT), which supports fast membership queries and further space reductions \cite{alanko2022succinct,rahman2021disk,rahman2021representation}. Recent work unifies SPSS, simplitigs~\cite{bvrinda2021simplitigs}, matchtigs~\cite{schmidt2023matchtigs}, and Eulertigs~\cite{schmidt2023eulertigs} under the broader framework of masked superstrings, providing a theoretical foundation for optimizing $k$-mer set representations for diverse bioinformatics applications \cite{sladky2023masked}. It should be noted, however, that masked superstrings do not provide exact SPSS representations as the superstring containing all input $k$-mers may introduce false positives, i.e., $k$-mers that do not occur in the original strings.

\smallskip \noindent \textbf{Background.}
The SPSS problem can be formulated as following~\cite{rahman2021representation}. Let $\Sigma$ be an alphabet equipped with an optional reverse-complement mapping.~\footnote{For example, $\Sigma=\{\mathtt{A},\mathtt{C},\mathtt{G},\mathtt{T}\}$ for DNA, and $\mathtt{A}$-$\mathtt{T}$ and $\mathtt{C}$-$\mathtt{G}$ are complements of each other. The reverse complement of, say, $\mathtt{ATGCAAT}$ is $\mathtt{ATTGCAT}$.} Given a positive integer $k$, a $k$-mer is any substring of length $k$. The spectrum of a set of strings $X$, denoted $\mathrm{spec}_k(X)$, is the set of all $k$-mers (and their reverse complements, depending on the domain application) that appear as substrings in at least one string of $X$.   Given $k$ and an input set of strings $I$ (each of length at least $k$) over $\Sigma$, a set of strings $S$ is an SPSS  if $\mathrm{spec}_k(I) = \mathrm{spec}_k(S)$, i.e.\ $S$ contains exactly the same set of $k$-mers as $I$, and no \textit{extra} ones. In this paper, we focus on simplitigs, or no-repetition SPSS; namely, each $k$-mer appears exactly \textit{once} in $S$. The aim is to minimize the \emph{weight} of an SPSS, which is defined as its cumulative length $w(S) = \sum_{z \in S} |z|$. 
\gp{For instance, an SPSS of input set 
$I = \{\T\G\G\A\C\G\G\G\A\C\G\G\C\A\T, \C\A\G\T\T\C\C, \C\G\G\T\C\G\T\T, \G\G\C\A\G\C\T\}$
for $k=3$ is given by 
$S = \{\texttt{CAGTTCC}, \texttt{TGGCT}, \texttt{CAA}, \texttt{AGCAT}, \texttt{GGGTCGGACGT}\}$ 
with $w(S) = 31$. Note how, even if the number of strings increased, each $k$-mer of $I$ is now uniquely represented in $S$: for instance, the repeated $k$-mer $\C\G\G$, occurring twice in the first string of $I$ and once in its third string, now only occurs once in $S$ (in its last string). 
}

One of the main tools for constructing an SPSS for $\mathrm{spec}_k(I)$
is the \textit{node-centric de~Bruijn graph (dBG)}.  Each node in the order-$k$ dBG represents a distinct $k$-mer in $\mathrm{spec}_k(I)$, and each directed edge indicates that the last $k-1$ symbols of the source $k$-mer are equal to the first $k-1$ symbols of the target $k$-mer (self-loops are allowed) (see Figure~\ref{fig:dBG-node-edge}). Consequently, each directed path of $\ell$ nodes in the dBG spells a string of length $\ell + k - 1$: the full $k$-mer of the first node followed by one additional symbol for each of the remaining $\ell-1$ nodes in the path.  For example, \gp{in the left of Figure~\ref{fig:dBG-node-edge}, the dBG path (edge labels shown as subscripts) 
$\texttt{TGG} \rightarrow_{\texttt{T}} \texttt{GGT} \rightarrow_{\texttt{C}} \texttt{GTC} \rightarrow_{\texttt{G}} \texttt{TCG} \rightarrow_{\texttt{G}} \texttt{CGG}\rightarrow_{\texttt{C}} \texttt{GGC}$ with $\ell = 6$
spells the string $\texttt{TGGTCGGC}$.}
\gp{Note that this path corresponds to a \emph{trail} (nodes can be repeated, but edges cannot) with $\ell = 6$ edges in the edge-centric dBG ($\texttt{TG} \rightarrow_{\texttt{G}} \texttt{GG} \rightarrow_{\texttt{T}} \texttt{GT} \rightarrow_{\texttt{C}} \texttt{TC} \rightarrow_{\texttt{G}} \texttt{CG} \rightarrow_{\texttt{G}} \texttt{GG} \rightarrow_{\texttt{C}} \texttt{GC}$), and vice versa.}

\begin{figure}
    \centering
\begin{tikzpicture}[xscale=0.8]
\node[mynode] (TGG) at (-0,0) {{\T\G}\G};
\node[mynode] (GGG) at (-1,1.7) {{\G\G}\G};
\node[mynode] (GGC) at (0,-1.5) {\G\G\C};
\node[mynode] (GCA) at (0,-2.5) {\G\C\A};
\node[mynode] (CAA) at (0.3,-3.5) {\C\A\A};
\node[mynode] (CAT) at (-1,-3.5) {\C\A\T};
\node[mynode] (GGA) at (2,1.7) {\G\G\A};
\node[mynode] (CGG) at (2,0.5) {\C\G\G};
\node[mynode] (GGT) at (3.5,-1) {\G\G\T};
\node[mynode] (GAC) at (4,1.7) {\G\A\C};
\node[mynode] (ACG) at (4,0.5) {\A\C\G};
\node[mynode] (AGC) at (2,-2.5) {\A\G\C};
\node[mynode] (GCT) at (2,-1.5) {\G\C\T};
\node[mynode] (CAG) at (2,-3.5) {\C\A\G};
\node[mynode] (GTC) at (3.5,-2.5) {\G\T\C};
\node[mynode] (TCC) at (5.25,-2.5) {\T\C\C};
\node[mynode] (TCG) at (6,-0.5) {\T\C\G};
\node[mynode] (CGT) at (6,1) {\C\G\T};
\node[mynode] (GTT) at (7.5,-1.6) {\G\T\T};
\node[mynode] (TTC) at (6.75,-2.5) {\T\T\C};
\node[mynode] (AGT) at (7.5,-3.5) {\A\G\T};

\draw[-Stealth, label] (TGG) to node[left]{$\C$} (GGC);
\draw[-Stealth] (TGG) to node[above, sloped, xshift = 0.75em]{$\A$} (GGA);
\draw[-Stealth] (TGG) to node[below, sloped, xshift = 0.3em, yshift = 0.1em]{$\T$} (GGT);
\draw[-Stealth] (GGC) to node[left]{$\A$} (GCA);
\draw[-Stealth] (GGC) to node[below]{$\T$} (GCT);
\draw[-Stealth] (TGG) to node[left]{$\G$} (GGG);
\draw[-Stealth] (CGG) to node[above, sloped, xshift = -0.2em, yshift = -0.2em]{$\G$} (GGG);
\draw[-Stealth] (GGG) to node[above, sloped]{$\A$} (GGA);
\draw[-Stealth, bend right = 30] (GGG) to node[left]{$\C$} (GGC);
\draw[-Stealth] (GGG) to node[below, sloped, yshift = 0.2em] {$\T$} (GGT);

\draw[-Stealth] (GCA) to node[right]{$\A$} (CAA);
\draw[-Stealth] (GCA) to node[above, sloped]{$\T$} (CAT);
\draw[-Stealth] (GCA) to node[above, sloped]{$\G$} (CAG);
\draw[-Stealth] (CAG) to node[right]{$\C$} (AGC);
\draw[-Stealth] (AGC) to node[above]{$\A$} (GCA);
\draw[-Stealth] (AGC) to node[right]{$\T$} (GCT);
\draw[-Stealth] (CGG) to node[above, sloped, xshift = -0.75em, yshift = -0.2em]{$\C$} (GGC);
\draw[-Stealth] (CGG) to node[right]{$\A$} (GGA);
\draw[-Stealth] (GGA) to node[below]{$\C$} (GAC);
\draw[-Stealth] (CGG) to node[above, sloped, xshift =0.2em]{$\T$} (GGT);
\draw[-Stealth] (ACG) to node[above]{$\G$} (CGG);
\draw[-Stealth] (GAC) to node[right]{$\G$} (ACG);
\draw[-Stealth] (ACG) to node[above]{$\T$} (CGT);
\draw[-Stealth] (CGT) to node[above, sloped]{$\T$} (GTT);
\draw[-Stealth] (AGT) to node[right]{$\T$} (GTT);
\draw[-Stealth] (AGT) to node[below, sloped]{$\C$} (GTC);
\draw[-Stealth] (CAG) to node[below, sloped]{$\T$} (AGT);
\draw[-Stealth] (GTT) to node[above, sloped]{$\C$} (TTC);
\draw[-Stealth] (GGT) to node[above, sloped, xshift = 0.5em, yshift = -0.1em]{$\T$} (GTT);
\draw[-Stealth] (GGT) to node[left]{$\C$} (GTC);
\draw[-Stealth] (TCG) to node[left]{$\T$} (CGT);
\draw[-Stealth] (TCG) to node[above, sloped, xshift = 1.5em]{$\G$} (CGG);
\draw[-Stealth] (GTC) to node[below, sloped, xshift = -0.2em]{$\G$} (TCG);
\draw[-Stealth] (GTC) to node[above]{$\C$} (TCC);
\draw[-Stealth] (TTC) to node[below, sloped, xshift = 0.3em]{$\G$} (TCG);
\draw[-Stealth] (TTC) to node[above]{$\C$} (TCC);
\draw[-Stealth] (CGT) to node[above, sloped, xshift = -0.5em]{$\C$} (GTC);
\end{tikzpicture}   
\hfill
\begin{tikzpicture}[xscale=0.8]
\node[mynode] (TG) at (0,0) {{\T\G}};
\node[mynode] (GG) at (0,-1) {\G\G};
\node[mynode] (GC) at (0,-2) {\G\C};
\node[mynode] (CA) at (0,-3) {\C\A};
\node[mynode] (AA) at (1.75,-4) {\A\A};
\node[mynode] (AT) at (0,-4) {\A\T};
\node[mynode] (GA) at (2,0) {\G\A};
\node[mynode] (CG) at (2,-1) {\C\G};
\node[mynode] (GT) at (3.5,-2) {\G\T};
\node[mynode] (AC) at (3.5,0) {\A\C};
\node[mynode] (CT) at (2,-2) {\C\T};
\node[mynode] (AG) at (2,-3) {\A\G};
\node[mynode] (TC) at (3.5,-1) {\T\C};
\node[mynode] (CC) at (5.5,-1) {\C\C};
\node[mynode] (TT) at (5.5,-2) {\T\T};

\draw[-Stealth, label] (TG) to node[left]{$\G$} (GG);
\draw[-Stealth]  (GG) to node[left, above, sloped]{$\A$} (GA);
\draw[-Stealth] (GG) to node[below, sloped, xshift=-1em]{$\T$} (GT);
\draw[-Stealth] (GG) to node[left]{$\C$} (GC);
\draw[-Stealth] (GG) to [loop left] node[left] {\G} (GG);
\draw[-Stealth] (GC) to node[below, yshift= 0.2em]{$\T$} (CT);
\draw[-Stealth] (GC) to node[left]{$\A$} (CA);
\draw[-Stealth] (CA) to node[left]{$\A$} (AA);
\draw[-Stealth] (CA) to node[left]{$\T$} (AT);
\draw[-Stealth] (CA) to node[below]{$\G$}  (AG);
\draw[-Stealth] (GA) to node[above]{$\C$} (AC);
\draw[-Stealth] (CG) to node[above]{$\G$} (GG);
\draw[-Stealth] (AC) to node[above, sloped]{$\G$} (CG);
\draw[-Stealth] (CG) to node[above, sloped]{$\T$} (GT);
\draw[-Stealth] (GT) to node[below]{$\T$} (TT);
\draw[-Stealth] (GT) to node[right]{$\C$} (TC);
\draw[-Stealth] (AG) to node[below, sloped]{$\T$} (GT);
\draw[-Stealth] (AG) to node[below, sloped]{$\C$} (GC);
\draw[-Stealth] (TT) to node[above, sloped]{$\C$} (TC);
\draw[-Stealth] (TC) to node[above]{$\G$} (CG);
\draw[-Stealth] (TC) to node[above]{$\C$} (CC);
\end{tikzpicture}   
\caption{Node-centric (left) and edge-centric (right) deBruijn graphs for input string set $I = \{\T\G\G\A\C\G\G\G\A\C\G\G\C\A\T, \C\A\G\T\T\C\C, \C\G\G\T\C\G\T\T, \G\G\C\A\G\C\T\}$ and $k=3$. On the left, nodes correspond to $k$-mers, and we have edges connecting $k$-mers that have an overlap of $k-1$ (edge labels are omitted). On the right, the nodes are the $(k-1)$-mers of $I$, and $k$-mers are given by edges: edge $(u,v,c)$ represents $k$-mer $uc$. Note that the number of nodes of the graph on the left is equal to the number of edges of the graph on the right (both equal to 21, the number of distinct $k$-mers of $I$). }
\label{fig:dBG-node-edge}
\end{figure}
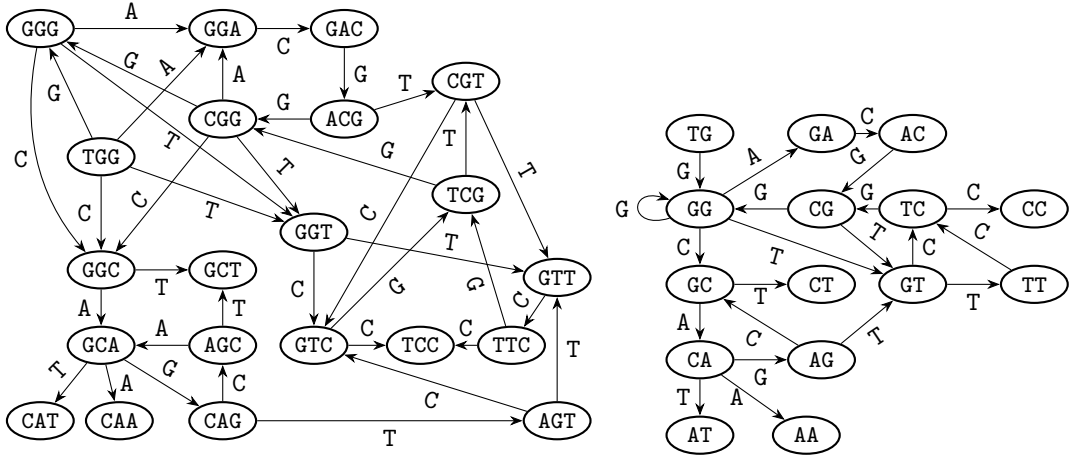

\gp{We can therefore build an SPSS by finding a} 
\textit{path \gp{(node)} cover} of the node-centric dBG, which is a collection of vertex-disjoint paths such that every node of the dBG belongs to exactly one path. Hence, all the $k$-mers in $I$ are represented only once in such a path cover. A minimum path cover minimizes the number of paths. Greedy algorithms like UST and its variants (UST-Compress, ESS-Compress, ESS-Tip-Compress) achieve near-optimal compression, outperforming traditional unitig-based and general-purpose compression methods by up to an order of magnitude \cite{rahman2021disk,rahman2021representation}; iterative SPSS decomposition and parallel algorithms further reduce storage and memory requirements~\cite{kitaya2021spss}. 
\gp{Equivalently, an SPSS can be found by looking for a \emph{trail (edge) cover} in the edge-centric dBG: a collection of edge-disjoint trails such that every edge of the dBG belongs to exactly one trail.}
In particular, the Eulertigs approach~\cite{schmidt2023eulertigs} constructs a \textit{minimum} \gp{such} cover of the edge-centric dBG \gp{(that is, minimum number of trails) by employing Eulerian tours}, \gp{providing} the shortest (i.e., smallest weight) SPSS representation in theory. \gp{Such representation only requires
$k-1$ further characters per trail, leading to a total size of $|\mathrm{spec}_k(I)| + (k-1)b$, where $b$ is the number of trails in the minimum cover}. This is known to yield very competitive no-repetition SPSS in practice and has become a strong baseline in graph-based compressors and indexers. 

\smallskip \noindent \textbf{Our contributions.}
We address a variant of the problem of identifying Spectrum-Preserving String Sets (SPSS) without repetitions (simplitigs) by exploiting the presence of cycles and tree-like attachments. While much of the literature reasons in terms of paths (and unitigs), biological data often contains circular molecules (plasmids, organellar genomes, complete bacterial chromosomes) \gp{and repeats}, suggesting that cycles could be considered in the node cover. 
For example, the string set $I = \{\mathtt{GCTGCGA}, \mathtt{AGGTT}, \mathtt{GTA}, \mathtt{ATCAC}, \mathtt{CAATA}\}$ is already minimal for SPSS with $k=3$; that is, $I$ is its own SPSS, since each $k$-mer in $I$ appears exactly once. However, $\mathtt{GCTGCGA}$ contains the \emph{circular} substring $\mathtt{GCTGC}$, where the first $k-1$ symbols $\mathtt{GC}$ are equal to the last $k-1$ symbols, allowing the latter to be omitted.
\gp{Furthermore, if we allow branching structures instead of simple paths, carefully handling the branching points, we could further reduce the representation size by avoiding the repetition of the first $k-1$ characters of the attaching $k$-mers.}

\gp{Based on these intuitions,} we extend the notion of path covers to account for branching structures that naturally arise in de~Bruijn graphs, where topologies are rarely limited to simple paths and cycles. \gp{Namely, we identify a (node) cover of what we call \emph{necklaces}\footnote{Also known as outerplanar, or functional, subgraphs}, which are subgraphs where each node has in-degree at most one. Intuitively, necklaces are formed by either a base cycle or path (serving as the ``necklace string'') with attached arborescences, called \emph{pendants} (serving as the ``necklace beads'') (see Figure~\ref{fig:treenecklaces}). Note that when the base is a path, the necklace is a tree. A \emph{necklace cover} is simply a node cover of the node-centric dBG in which every node belongs to exactly one (open or closed) necklace. Note that paths and cycles are special cases of necklaces without pendants. To {capture} 
such structures {and compactly handle the branching points}, we introduce a \emph{parenthesis representation} {for necklace covers.} In this way, we can define the cost of a necklace cover as the cumulative length of its parenthesis representation, and a \emph{minimum necklace cover} as one minimizing such cost. 
Since strings could need to be concatenated for storage or streaming, we can also consider a 
\emph{separator-based} representation of a set $S$ of strings: we append to each string of $S$ (except the last) the special character $|$, and then we concatenate all of these to form a single string over $\Sigma \cup \{|\}$. Since $|S|-1$ further separator characters were added to yield this representation, we can define the
\emph{separator-aware cost} as $\mathrm{cost}(S)=w(S)+|S|-1$. For example, for the previously considered SPSS $S = \{\texttt{CAGTTCC}, \texttt{TGGCT}, \texttt{CAA}, \texttt{AGCAT}, \texttt{GGGTCGGACGT}\}$, we obtain the separator-based representation $\texttt{CAGTTCC} |\texttt{TGGCT}| \texttt{CAA}| \texttt{AGCAT}| \texttt{GGGTCGGACGT}$, of size $35 = w(S) +|S|-1$.
}

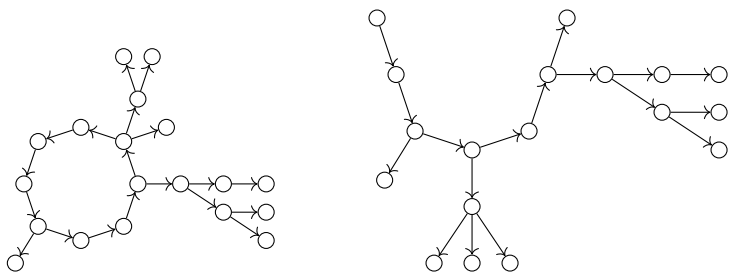
\begin{figure}[htbp]
\centering
\begin{tikzpicture}[scale=0.75]
    \vertex (1) at (0,0) {};
    \vertex (2) at (0.75,0.25) {};
    \vertex (3) at (1,1) {};
    \vertex (4) at (0.75,1.75) {};
    \vertex (5) at (0,2) {};
    \vertex (8) at (-0.75,0.25) {};
    \vertex (7) at (-1,1) {};
    \vertex (6) at (-0.75,1.75) {};
    \vertex (9) at (1.75,1) {};
    \vertex (10) at (1.5, 2) {};
    \vertex (10bis) at (1, 2.5) {};
    \vertex (11) at (-1.15, -0.4) {};
    \vertex (12) at (2.5, 1) {};
    \vertex (12bis) at (3.25, 1) {};
    \vertex (13) at (1.25, 3.25) {};
    \vertex (13bis) at (0.75, 3.25) {};

    \vertex (14) at (2.5,0.5) {};
    \vertex (15) at (3.25,0.5) {};
    \vertex (16) at (3.25,0) {};

    \path[->]
    (1) edge (2)
    (2) edge (3)
    (3) edge (4)
    (4) edge (5)
    (5) edge (6)
    (6) edge (7)
    (7) edge (8)
    (8) edge (1)
    (3) edge (9)
    (9) edge (12)
    (4) edge (10)
    (4) edge (10bis)
    (8) edge (11)
    (10bis) edge (13)
    (12) edge (12bis)
    (10bis) edge (13bis)
    (9) edge (14)
    (14) edge (15)
    (14) edge (16)
    ;
\end{tikzpicture}
\hspace{1cm}
\begin{tikzpicture}
    \vertex (1) at (0,0) {};
    \vertex (2) at (0.75,0.25) {};
    \vertex (3) at (1,1) {};
    \vertex (4) at (1.25, 1.75) {};
    \vertex (6) at (-1.25,1.75) {};
    \vertex (7) at (-1,1) {};
    \vertex (8) at (-0.75,0.25) {};
    \vertex (9) at (1.75,1) {};
    \vertex (10) at (0,-0.75) {};
    \vertex (10l) at (-0.5,-1.5) {};
    \vertex (10r) at (0.5,-1.5) {};
    \vertex (11) at (-1.15, -0.4) {};
    \vertex (12) at (0, -1.5) {};
    \vertex (13) at (2.5,1) {};
    \vertex (14) at (3.25,1) {};
    
    \vertex (15) at (2.5,0.5) {};
    \vertex (16) at (3.25,0.5) {};
    \vertex (17) at (3.25,0) {};

    \path[->]
    (1) edge (2)
    (2) edge (3)
    (3) edge (4)
    (6) edge (7)
    (7) edge (8)
    (8) edge (1)
    (3) edge (9)
    (8) edge (11)
    (1) edge (10)
    (10) edge (12)
    (10) edge (10l)
    (10) edge (10r)
    (9) edge (13)
    (13) edge (14)
    (9) edge (15)
    (15) edge (16)
    (15) edge (17)
    ;
\end{tikzpicture}
\caption{Closed (left) and open (right) necklaces, with resp.~4 and~3 pendants.}
\label{fig:treenecklaces}
\end{figure}

We present a linear-time greedy algorithm \gp{ \greedyNeckAlgo for finding a necklace cover of a given node-centric dBG. The algorithm takes as input any node cover composed of paths and cycles (a \emph{PC cover}), and from this it} constructs a {necklace cover}. \gp{We prove that
our algorithm outputs a minimum necklace cover (i.e., minimum cumulative length of its parenthesis representation).}
\gp{Since the size of a minimum necklace cover} is always \gp{smaller} than (or equal to) the size of the minimum SPSS (the latter is a special case of the former \gp{as paths are trivial necklaces)}, 
\gp{our algorithm produces a representation of $\mathrm{spec}_k(I)$ that is always smaller than the minimum SPSS. On the other hand, a PC cover is easy to obtain (greedily only keep one in-neighbor and out-neighbor for all nodes).}  
Furthermore, we prove that there exists an infinite family of input sets $I$ for which \gp{our representation is always better than Eulertigs: the parenthesis representation of the minimum necklace cover} requires only a fraction $4/(k+1)$ of the symbols needed by the minimum SPSS.

\gp{We also perform an experimental evaluation of our algorithm. We compare our proposed method against both a fully-greedy baseline for finding necklace covers, and state-of-the-art SPSS methods, namely}
Eulertigs~\cite{schmidt2023eulertigs} and Masked Superstring~\cite{sladky2023masked},
measuring the size of the resulting representation \gp{(as its cumulative size)} and execution time for $k$-mers in several biological datasets.
Our preliminary results show that the smallest space occupancy for small values of $k$ is obtained by the Masked Superstring method, while our proposed methods using the pseudo-forest representation of necklaces achieved comparable performance. For larger values of $k$, our methods attained the smallest space occupancy among all tested algorithms.

It should be noted, however, that the Masked Superstring method does not solve the SPSS problem exactly, as it may introduce false positives, i.e., $k$-mers that do not occur in the original input sequences. In contrast, our proposed method, as well as Eulertigs, 
guarantee an exact $k$-mer spectrum.  
The main insight from these experiments is that by using necklaces produced by our greedy algorithm
we achieve a space reduction comparable to or better than that of the superstring-based method Masked Superstring, but without introducing false positives in the SPSS representation. This confirms that the use of necklace structures, combining cycles and open paths with pendants to obtain a parenthesis representation, provides a favorable trade-off between space efficiency and computational cost, demonstrating that circularity and pendant trees in necklaces are key factors for achieving compact representations.

\subsection{Notation and Preliminaries}
\label{section:preliminaries}

\paragraph*{Strings} A \emph{string} $S= S[1]\cdots S[|S|]$, of length $|S|$, is a sequence of characters $S[i]$ from an alphabet $\Sigma$. $T$ is a \emph{substring} of $S$ if $T = S[i]\cdots S[i+|T|]$ for some $i \le |S|$. The \emph{prefix} (resp. \emph{suffix}) of $S$ of length $\ell$, denoted $pre_\ell(S)$ (resp. $suf_\ell(S)$), is the string $S[1]\cdots S[\ell]$ (resp. $S[|S| - \ell +1]\cdots S[|S|]$). A \emph{$k$-mer} of $S$ is a substring of length $k$. The \emph{spectrum} of a set of strings $I$, denoted $\mathrm{spec}_k(I)$ is the set of all $k$-mers of the strings of $I$. 
The \emph{spectrum preserving string set} (SPSS) of $I$ is a set of strings $S$ such that $\mathrm{spec}_k(S) = \mathrm{spec}_k(I)$. In this work, unless otherwise stated, we will consider SPSS as being \emph{without repetitions}, that is, we also assume that each $k$-mer of $\mathrm{spec}_k(I)$ appears exatcly once in $S$. 
The \emph{weight} of a set of strings is its total character count: $w(I) = \sum_{s\in I} |s|$. We consider the \emph{separator-based representation} for a set $S$ of strings, where we concatenate the strings of $S$, separating them with a special character $|$. For this representation, we consequently define the \emph{separator-aware cost} as the length of the resulting string, i.e. $\textrm{cost}(S) = w(S) + |S| -1$.
A \emph{circular string} is a string $X$ where we assume that the last character of the string connects back to the first, i.e., $X[|X|+i] = X[i]$ for all $i$. For instance, if $X= \A\C\G\G\T$ is circular, then its $k$-mers (for $k=3$) are not only $\A\C\G, \C\G\G,$ and $\G\G\T$, but $\G\T\A$ ($\underline{\A}\C\G\underline{\G\T}$) and $\T\A\C$ ($\underline{\A\C}\G\G\underline{\T}$) as well.

\paragraph*{Graphs} A directed graph is a pair $G=(V(G),E(G))$, where $V(G)$ is the set of nodes, and $E(G) \subseteq V(G)\times V(G)$ is its set of edges.
In a \emph{labeled graph}, each edge $(u,v)$ has an associated label $c$; we denote such a labeled edge as $(u,v,c)$. A \emph{subgraph} $H$ of $G$ is a graph such that $V(H) \subseteq V(G)$ and $E(H) \subseteq E(G)$. 
For a node $v$, its \emph{out-neighbors}, or \emph{adjacent nodes}, are the nodes $N^+(v) = \{w \in V(G) \ | \ (v,w) \in E(G)\}$. The number of out-neighbors of a node is called its \emph{out-degree} and is denoted as $d^+(v)$. Symmetrically, we define its \emph{in-neighbors} as $N^-(v) = \{u \in V(G) \ | \ (u,v) \in E(G)\}$, and its cardinality as the \emph{in-degree} of the node, denoted $d^-(v)$. 
A \emph{path} of length $k$ is a sequence of $k$ distinct adjacent nodes: $u_1,u_2,\ldots,u_k$, such that $(u_i, u_{i+1})\in E(G)$ for all $i<k$, and $u_i \neq u_j$ for all $i\neq j$. 
A \emph{trail} of length $k$ is a sequence of $k$ distinct adjacent edges: $(u_1,u_2), (u_2,u_3), \ldots (u_k,u_{k+1})$, such that the $(u_i, u_{i+1})$ are distinct for all $i\le k$ (but nodes may repeat). A \emph{node cover} (resp. \emph{edge cover)} of graph $G$ is a set of subgraphs $H_1,...,H_k$ of $G$ such that $\cup_i V(H_i) = V(G)$ (resp. $\cup_i E(H_i) = E(G)$) and $V(H_i) \cap V(H_j) = \emptyset$ (resp. $E(H_i) \cap E(H_j) = \emptyset$). More specifically, $H_1,...,H_k$ is a \emph{path node cover} if each $H_i$ is a path, a \emph{path-and-cycle cover} (PC cover) if each $H_i$ is either a path or a cycle, and a \emph{trail edge cover} if each $H_i$ is a trail. 
 
A \emph{necklace} of $G$ is a connected subgraph where each node has in-degree at most 1. Note that this implies that a necklace is formed by either a cycle or a path, called the \emph{root} of the necklace, with attached arborescences (that is, directed trees), called \emph{pendants}. When the root is a cycle we have a \emph{closed} necklace, otherwise the whole necklace is an arborescence, and we also refer to it as an \emph{open} necklace. 
See Figure~\ref{fig:treenecklaces} for examples. 

\paragraph*{deBruijn Graphs} deBruijn graphs are used to model relationships between $k$-mers of a string, or set of strings. The \emph{order-$k$ node-centric deBruijn graph} for an input string set $I$ is a graph $G=(V,E)$ where each node is a $k$-mer of $I$ ($V = \mathrm{spec}_k(I)$), and we have an edge $(u,v)\in E$ if and only if the suffix of length $k-1$ of $u$ is a prefix of $v$, i.e. $suf_{k-1}(u) = pre_{k-1}(v)$. We can equip each edge of $G$ with labels in a natural way: the label of edge $(u,v)$ is given by the last character $c$ of $v$: in this way, $v = suff_{k-1}(u) c$. See the left of Figure~\ref{fig:dBG-node-edge} for an example. On the other hand, the \emph{order-$k$ edge-centric deBruijn graph} for $I$ is a labeled graph $G' = (V',E')$ where each node is a $(k-1)$-mer of $I$ ($V' = \mathrm{spec}_{k-1}(I)$), and $(u,v,c)\in E'$ if and only if the suffix of length $k-2$ of $u$ is a prefix of $v$, i.e. $suf_{k-2}(u) = pre_{k-2}(v)$, and the concatenation $uc$ is a $k$-mer of $I$, i.e. $uc\in \mathrm{spec}_k(I)$. See the right of Figure~\ref{fig:dBG-node-edge} for an example. Each path in $G$ corresponds to a trail in $G'$, and vice versa. From this it follows that is a 1:1 correspondence between path node covers of the node-centric dBG and trail edge covers of the edge-centric dBG.
It is easy to see that any order-$k$ edge-centric deBruijn graph is an order-$(k-1)$ node-centric deBruijn graph as well (albeit possibly for a different input set of strings, as not all edges between compatible $(k-1)$-mers are added).

\paragraph*{Eulerian Tours and Eulertigs}
We give here a quick background on Eulertigs~\cite{schmidt2023eulertigs}, the state-of-the-art algorithm for finding a minimum SPSS (i.e. of minimum weight). The idea is based on \emph{Eulerian tours}: an Eulerian tour of a graph $G$ is a trail that visits each edge in $E(G)$ exactly once, and starts and ends at the same node. It can be seen as a trail edge cover of $G$ formed by just one trail. Graphs that admit an Eulerian tour are called \emph{Eulerian graphs}. 
A graph is Eulerian if and only if for each $v\in V(G)$, its outdegree is equal to its indegree: $d^+(v)= d^-(v)$. A node for which these values are different is called \emph{unbalanced}. 
Given an Eulerian graph, it then takes linear time to find an Eulerian tour. 

The idea behind Eulertigs is as follows. Start from the edge-centric dBG $G$ of $I$, and add the minimum number of edges necessary to make the graph Eulerian, yielding $G'$. These are fake unlabeled edges (called \emph{breaking edges}) that will later be disregarded; as such, they do not respect the $k$-mer overlap conditions. Note that, in this way, $G'$ is not technically a deBruijn graph anymore. Then, look for an Eulerian tour of $G'$, starting with a breaking edge. Build the corresponding SPSS by spelling the strings along the edges of the Eulerian tour, beginning a new string anytime a breaking arc is traversed. The authors prove that this yields a minimum SPSS. Note that, if there are $b$ breaking edges, the weight of the final representation is $|\mathrm{spec}_k(I)| + (k-1)b$: as every time we start a new string we need to explicitly spell the $(k-1)$-prefix of the first $k$-mer, while when extending an existing string we are just adding one character per $k$-mer. 

While making a deBruijn graph Eulerian in a way that preserves the deBruijn graph's conditions is NP-hard~\cite{bernardini2022making}, making a regular graph Eulerian requires linear time. Indeed, it is sufficient to fix every pair of oppositely-unbalanced nodes: find a pair of nodes $u,v$ such that $d^+(u) - d^-(u) <0$ and $d^+(v) - d^-(v) >0$, and add edge $(u,v)$; repeat. Note that, by the handshaking lemma, such a pair exists if and only if the graph is not Eulerian. Thus, the set of breaking edges with minimum cardinality can be found in linear time, leading to a linear time algorithm overall. 

As an example, consider the edge-centric deBruijn graph on the right of Figure~\ref{fig:dBG-node-edge}, corresponding to input string set $I = \{\T\G\G\A\C\G\G\A\C\G\G, \C\A\G\T\T\C\C, \C\G\G\T\C\G\T\T\A, \G\G\C\A\G\C\T\}$. The minimum number of breaking edges to make it Eulerian is 5 (there are ten unbalanced nodes: \T\G, \C\C, \G\G, \C\T, \G\T, \A\G, \C\A, \T\A, \A\A, and \T\T), and a possible Eulerian tour is given by (breaking edges unlabeled) 
\texttt{CA \dBGedge{G} AG \dBGedge{T} GT \dBGedge{T} TT \dBGedge{C} TC \dBGedge{C} CC \dBGedge{} TG \dBGedge{G} GG \dBGedge{C} GC \dBGedge{T} CT \dBGedge{} CA \dBGedge{A} AA \dBGedge{} AG \dBGedge{C} GC \dBGedge{A} CA \dBGedge{T} AT \dBGedge{} GG \dBGedge{G} GG \dBGedge{T} GT \dBGedge{C} TC \dBGedge{G} CG \dBGedge{G} GG \dBGedge{A} GA \dBGedge{C} AC \dBGedge{G} CG \dBGedge{T} GT}, corresponding to minimum SPSS 
$S = \{\texttt{CAGTTCC}, \texttt{TGGCT}, \texttt{CAA}, \texttt{AGCAT}, \texttt{GGGTCGGACGT}\}$
of weight $31 = 21 + 2\times5 = |\mathrm{spec}_k(I)| + (k-1)b$. 

\subsection{Roadmap}
We start by presenting our parenthesis representation of necklaces in Section~\ref{section:parenthesis-representation}. This representation will give us the metrics needed to define a minimum necklace cover. Then, in Section~\ref{section:tree-necklace-algo}, we can present our algorithm \greedyNeckAlgo, to find such a minimum necklace cover. In that section, we also prove that there is an infinite family of graphs for which the minimum necklace cover parenthesis representation is smaller than the SPSS. Finally, Section~\ref{section:experiments} presents the experimental evaluation  of necklace covers over real biological datasets, comparing it with state-of-the-art SPSS methods.

\section{Representing Necklace Covers}
\label{section:parenthesis-representation}
In this section we present a balanced-parenthesis representation of necklace covers, with optional separator-based representation as well. Intuitively, we show how to represent a necklace through a string with added (balanced) parenthesis representing the branches. 

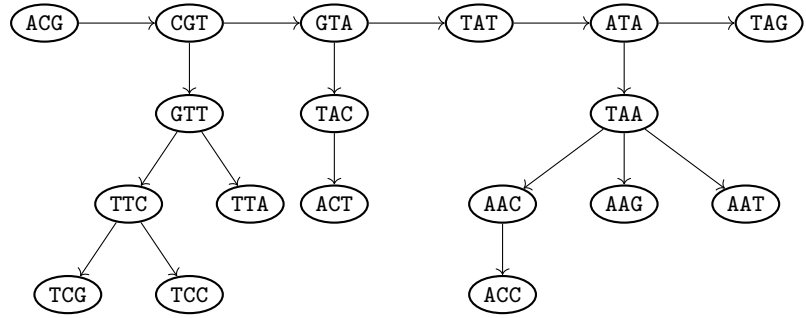
\begin{figure}
    \centering
\begin{tikzpicture}[
  every node/.style={mynode, draw},
  edge from parent/.style={draw, ->},
  level distance=12mm,
  sibling distance=16mm
]


\node[mynode] (x2) {$\C\G\T$}
  child { node (b) {$\G\T\T$}
    child { node (e) {$\T\T\C$}
      child { node {$\T\C\G$} }
      child { node {$\T\C\C$} }
    }
    child { node {$\T\T\A$} }
  };

\node[mynode, left=10mm of x2] (x1) {$\A\C\G$};
\node[mynode, right=10mm of x2] (x3) {$\G\T\A$}   
child { node (c) {$\T\A\C$}
    child { node {$\A\C\T$} }
  };
\node[mynode, right=10mm of x3] (x4) {$\T\A\T$};

\node[mynode, right=10mm of x4] (x5) {$\A\T\A$}
  child { node {$\T\A\A$}
    child { node {$\A\A\C$}
      child { node {$\A\C\C$} }
    }
    child { node {$\A\A\G$}}
    child { node {$\A\A\T$}}
  };

\node[right=10mm of x5] (x6) {$\T\A\G$};

\draw[->] (x1) -- (x2);
\draw[->] (x2) -- (x3);
\draw[->] (x3) -- (x4);
\draw[->] (x4) -- (x5);
\draw[->] (x5) -- (x6);

\end{tikzpicture}

\caption{
Bottom: open necklace with same pendant structure as the subtrees of the root on the top, with parenthesis representation given by $\texttt{ACGT(T(C(G)C)A)A(CT)TA(A(CC)(G)T)G}$ }
\label{fig:BP-tree}
\end{figure}

In the usual way to represent trees through balanced parentheses (BP)~\cite{jacobsonBP63533,bp}, we have a set of parentheses for each node, enclosing the parentheses of its children. 
This can be built by traversing the tree in preorder, opening a parenthesis when visiting a node for the first time, and closing a parenthesis when going back through it, after having visited its subtree.
In our case, we want to embed a similar nested parentheses structure, modeling the branchings inside the node pendants, inside of a string spelling the $k$-mers of the root necklace. Thus, after adding the root path (or cycle) as a string, we embed its pendants inside pairs of parentheses, at the positions of the string corresponding to the $k$-mers where the corresponding branching occurs. Each pendant is encoded in tree BP representation, where the edge labels are kept: every time we visit a node for the first time, entering it with label $\ell$, we add $(\ell$ to the string. We then close the parentheses as usual after visiting all of the node's subtree.

For instance, for the necklace on the bottom of Figure~\ref{fig:BP-tree}, we start with $\texttt{ACGTATAG}$. Then, add a set of parenthesis for each tree stemming from such root path, at the position in the string corresponding to the $k$-mer where the branching occurs:
$\texttt{ACGT()A()TA()G}$. We build the BP representation for its three subtrees, respectively given by $\texttt{T(C(G)(C))(A)}$, $\texttt{C(T)}$, and $\texttt{A(C(C))(G)(T)}$. We note that we can further reduce the number of parenthesis without losing information, by omitting the enclosing parentheses for the last child of a node. Indeed, after the previous children, all that is left before the closing parenthesis of the parent must correspond to the subtree of the last child. 
This also implies that, when we have unary paths (as in the second and third subtree here), we just concatenate the corresponding labels without using parentheses. 
In this setting, the three subtrees of before would become $\texttt{T(C(G)C)A}$, $\texttt{CT}$, and $\texttt{A(CC)(G)T}$, removing 10 further parentheses, and leading to the final representation: 
$\texttt{ACGT(T(C(G)C)A)A(CT)TA(A(CC)(G)T)G}$.

In the resulting representation, each pair of parentheses represents a branch, or a choice between characters with which we can extend the current string. Thus, each pair of parenthesis corresponds to a leaf of the necklace. 
Unary paths (without branches) are encoded without parentheses, as in the case of the central subtree of the bottom of Figure~\ref{fig:BP-tree}, simply expressed as \C\T.
Enclosed parenthesis represent branches at different levels, while $c$ consecutive parentheses represent a branch of $c+1$ children for the previous node. Indeed, in string $\texttt{T(C(G)C)A}$ (corresponding to the leftmost subtree in the bottom of Figure~\ref{fig:BP-tree}) the first set of parentheses expresses the possibility of branching to the subtree rooted in $\texttt{TTC}$, or proceeding to $\texttt{A}$, while the internal set of parenthesis gives us the next-level branch, choosing between branching to $\texttt{TCG}$, or continuing to $\texttt{TCC}$. 
Instead, in $\texttt{A(CC)(G)T}$, representing the rightmost subtree in the bottom of Figure~\ref{fig:BP-tree}, the two consecutive pairs of parentheses represent the possibility of branching to $\texttt{CC}$ (left), or to $\G$ (center), or lastly $\T$ (right).

If the root of the necklace is a cycle, the only difference is that we start from the corresponding circular string (omitting the last $k-1$ characters of the string) instead of a regular string, and then we proceed exactly as above.

\begin{figure}
    \centering
\begin{tikzpicture}[xscale=0.8]
\node[mynode, fill = red!60!white] (TGG) at (-0,0) {{\T\G}\G};
\node[mynode, fill = red!60!white] (GGG) at (-1,1.7) {{\G\G}\G};
\node[mynode, fill = green!70!black] (GGC) at (0,-1.5) {\G\G\C};
\node[mynode, fill = green!70!black] (GCA) at (0,-2.5) {\G\C\A};
\node[mynode, fill = green!70!black] (CAA) at (0.3,-3.5) {\C\A\A};
\node[mynode, fill = green!70!black] (CAT) at (-1,-3.5) {\C\A\T};
\node[mynode, fill = green!70!black] (GGA) at (2,1.7) {\G\G\A};
\node[mynode, fill = green!70!black] (CGG) at (2,0.5) {\C\G\G};
\node[mynode, fill = red!60!white] (GGT) at (3.5,-1) {\G\G\T};
\node[mynode, fill = green!70!black] (GAC) at (4,1.7) {\G\A\C};
\node[mynode, fill = green!70!black] (ACG) at (4,0.5) {\A\C\G};
\node[mynode, fill = green!70!black] (AGC) at (2,-2.5) {\A\G\C};
\node[mynode, fill = green!70!black] (GCT) at (2,-1.5) {\G\C\T};
\node[mynode, fill = green!70!black] (CAG) at (2,-3.5) {\C\A\G};
\node[mynode, fill = blue!50!white] (GTC) at (3.5,-2.5) {\G\T\C};
\node[mynode, fill = blue!50!white] (TCC) at (5.1,-2.5) {\T\C\C};
\node[mynode, fill = blue!50!white] (TCG) at (6,-0.5) {\T\C\G};
\node[mynode, fill = blue!50!white] (CGT) at (6,1) {\C\G\T};
\node[mynode, fill = blue!50!white] (GTT) at (7.5,-1.6) {\G\T\T};
\node[mynode, fill = blue!50!white] (TTC) at (6.75,-2.5) {\T\T\C};
\node[mynode, fill = green!70!black] (AGT) at (7.5,-3.5) {\A\G\T};

\draw[-Stealth, label] (TGG) to node[left]{$\C$} (GGC);
\draw[-Stealth] (TGG) to node[above, sloped, xshift = 0.75em]{$\A$} (GGA);
\draw[-Stealth, very thick, red!60!white] (TGG) to node[below, sloped, xshift = 0.3em, yshift = 0.1em]{$\T$} (GGT);
\draw[-Stealth, very thick, green!70!black] (GGC) to node[left]{$\A$} (GCA);
\draw[-Stealth, very thick, green!70!black] (GGC) to node[below]{$\T$} (GCT);
\draw[-Stealth, very thick, red!60!white] (TGG) to node[left]{$\G$} (GGG);
\draw[-Stealth] (CGG) to node[above, sloped, xshift = -0.2em, yshift = -0.2em]{$\G$} (GGG);
\draw[-Stealth] (GGG) to node[above, sloped]{$\A$} (GGA);
\draw[-Stealth, bend right = 30] (GGG) to node[left]{$\C$} (GGC);
\draw[-Stealth] (GGG) to node[below, sloped, yshift = 0.2em] {$\T$} (GGT);

\draw[-Stealth, very thick, green!70!black] (GCA) to node[right]{$\A$} (CAA);
\draw[-Stealth, very thick, green!70!black] (GCA) to node[above, sloped]{$\T$} (CAT);
\draw[-Stealth, very thick, green!70!black] (GCA) to node[above, sloped]{$\G$} (CAG);
\draw[-Stealth, very thick, green!70!black] (CAG) to node[right]{$\C$} (AGC);
\draw[-Stealth] (AGC) to node[above]{$\A$} (GCA);
\draw[-Stealth] (AGC) to node[right]{$\T$} (GCT);
\draw[-Stealth, very thick, green!70!black] (CGG) to node[above, sloped, xshift = -0.75em, yshift = -0.2em]{$\C$} (GGC);
\draw[-Stealth, very thick, green!70!black] (CGG) to node[right]{$\A$} (GGA);
\draw[-Stealth, very thick, green!70!black] (GGA) to node[below]{$\C$} (GAC);
\draw[-Stealth] (CGG) to node[above, sloped, xshift =0.2em]{$\T$} (GGT);
\draw[-Stealth, very thick, green!70!black] (ACG) to node[above]{$\G$} (CGG);
\draw[-Stealth, very thick, green!70!black] (GAC) to node[right]{$\G$} (ACG);
\draw[-Stealth] (ACG) to node[above]{$\T$} (CGT);
\draw[-Stealth, very thick, blue!70!white] (CGT) to node[above, sloped]{$\T$} (GTT);
\draw[-Stealth] (AGT) to node[right]{$\T$} (GTT);
\draw[-Stealth] (AGT) to node[below, sloped]{$\C$} (GTC);
\draw[-Stealth, very thick, green!70!black] (CAG) to node[below, sloped]{$\T$} (AGT);
\draw[-Stealth, very thick, blue!70!white] (GTT) to node[above, sloped]{$\C$} (TTC);
\draw[-Stealth] (GGT) to node[above, sloped, xshift = 0.5em, yshift = -0.1em]{$\T$} (GTT);
\draw[-Stealth] (GGT) to node[left]{$\C$} (GTC);
\draw[-Stealth, very thick, blue!70!white] (TCG) to node[left]{$\T$} (CGT);
\draw[-Stealth] (TCG) to node[above, sloped, xshift = 1.5em]{$\G$} (CGG);
\draw[-Stealth] (GTC) to node[below, sloped, xshift = -0.2em]{$\G$} (TCG);
\draw[-Stealth] (GTC) to node[above]{$\C$} (TCC);
\draw[-Stealth, very thick, blue!70!white] (TTC) to node[below, sloped, xshift = 0.3em]{$\G$} (TCG);
\draw[-Stealth, very thick, blue!70!white] (TTC) to node[above]{$\C$} (TCC);
\draw[-Stealth, very thick, blue!70!white] (CGT) to node[above, sloped, xshift = -0.5em]{$\C$} (GTC);
\end{tikzpicture}   
\caption{Necklace cover for the graph of Figure~\ref{fig:dBG-node-edge}, formed by two closed necklaces (green and blue) and one open necklace (red).}
\label{fig:necklace-cover-example}
\end{figure}
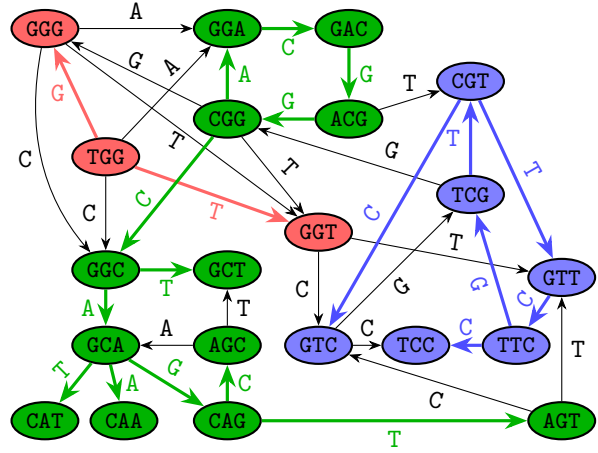

For example, consider the necklace cover shown in Figure~\ref{fig:necklace-cover-example}, for the same input set $I = \{\T\G\G\A\C\G\G\G\A\C\G\G\C\A\T, \C\A\G\T\T\C\C, \C\G\G\T\C\G\T\T, \G\G\C\A\G\C\T\}$ as in Figure~\ref{fig:dBG-node-edge} ($k=3$). This cover is formed by two closed necklaces with root cycles \texttt{GGAC} and \texttt{TCGT}, shown in green and blue respectively, and one open necklace, in red. The BP representations of the two closed necklaces are given by \texttt{GG(C(A(A)(T)G(T)C)T)AC} and \texttt{TC(C)GT(C)}, while the open necklace can be represented as \texttt{TGG(G)T}. On top of the necklaces, we need to retain information about which necklaces were open and closed.
Assuming that the strings of the cover are ordered, it suffices to put a single special character $\$$ to separate closed necklaces, which we place first, from open ones. 
Thus, the necklace cover is given by $\mathcal{C} = \{ \texttt{GG(C(A(A)(T)G(T)C)T)AC}, \texttt{TC(C)GT(C)}, \$ \texttt{TC(C)GT(C)}\}$.

To reconstruct the $i$-th $k$-mer, it suffices to proceed to the $k$-th character of $\Sigma$ in the string, and retrieve its parent $k-1$ times, concatenating the found letters. We have two cases for the parent, according to whether we are in a unary path: if the previous character is in $\Sigma$, then they are the parent, otherwise, if the previous character is $($, we are at the root of a subtree and we need to find the enclosing pair of parentheses to retrieve the parent. Note that we have two choices: either the character before $($ is in $\Sigma$, and we have found the parent, or it is $)$, closing the BP representation of the previous sibling. Thus, to jump to the parent, we need to jump to the corresponding opening parenthesis of the closing one right before our position, and repeat until we reach a letter of $\Sigma$. This can be done in constant time if we allow further linear space, using operation $findopen(i)$ \cite{munro2001succinct}. Since we do not want to add space, and we still need to linearly scan until position $i$, we do so by scanning backwards.
If we are at the outermost layer, and the string is circular, we further have to wrap around. 

We obtain the following:
\begin{lemma}
    \label{lemma:BP-representation-size}
    The parenthesis representation for a necklace cover of an input set $I$ can be computed in $O(w(I))$ time and space. Let $N_k$ be the number of distinct $k$-mers in $I$ Let $N_C$ be the number of closed necklaces, $N_O$ be the number of open necklaces, and $N_L$ be the number of leaves over all the pendants in the necklace cover. The resulting representation uses $N_k + (k-1) N_O + 2 N_L +1$ symbols from the alphabet $\Sigma \cup \{ \mathtt{(}, \$, \mathtt{)}\}$, where $2 N_L$ corresponds to the number of parentheses.
\end{lemma}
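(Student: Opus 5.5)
The proof splits into a counting argument for the size and a traversal argument for the running time, carried out separately for each necklace of the cover and then summed.

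\emph{Alphabet symbols.} Fix a necklace $\mathcal{N}$ with $n$ nodes. I first show that it contributes exactly $n$ characters of $\Sigma$ if it is closed, and $n+k-1$ if it is open.

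In the open case, the first node of the root path is spelled in full ($k$ characters). Every other node of $\mathcal{N}$ has exactly one in-edge inside the necklace. The construction emits exactly one character for it: the label of that in-edge. This holds whether the node lies on the root path or in a pendant, where it appears as "$(\ell$" or simply "$\ell$" on a unary stretch or for a last child. The total is therefore $k+(n-1)=n+k-1$.

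In the closed case the root is a cycle. Its circular string has one character per cycle node, since the trailing $k-1$ characters are omitted. Pendant nodes again contribute one label each, for a total of $n$.

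Summing over all necklaces gives $N_k+(k-1)N_O$ characters of $\Sigma$, because the necklaces partition the $N_k$ nodes.

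\emph{Parentheses.} Each pendant is encoded in BP form with the enclosing pair of the last child dropped. A node with $c$ children therefore receives $c-1$ pairs inside the pendant. Each pendant attached to the root also receives one outer pair. I want to show that the number of pairs equals $N_L$, the number of pendant leaves.

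For a single pendant, the standard tree identity $L=1+\sum_v(d^+(v)-1)$ over internal nodes gives exactly that the outer pair plus the $\sum(c_v-1)$ inner pairs equals the leaf count. The one subtlety is a root-path node with several pendants, which only changes how many outer pairs appear at that position. Each pendant still gets its own outer pair, because the root continuation is the "last child" that carries no parentheses. I would state this convention explicitly and check it against the running example of the open necklace in Figure~\ref{fig:BP-tree}.

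Summing over pendants gives $2N_L$ parenthesis symbols. One further $\$$ separates closed necklaces from open ones, which yields the total $N_k+(k-1)N_O+2N_L+1$.

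\emph{Time and space.} I build the node-centric dBG from $I$ in $O(w(I))$ time, for instance by hashing $k$-mers with rolling fingerprints or by using a suffix-array based construction. Given the cover, for each necklace I locate its root in linear time: follow in-edges until a node repeats or has no in-neighbor. I then perform a DFS from the root path or cycle, outputting the symbols in preorder and visiting last children last so that their parentheses are omitted.

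Each node and edge is touched $O(1)$ times, and the output size is $O(N_k+kN_O+N_L)$. Since $N_L,N_k\le w(I)$ and each open necklace's $k$-prefix is a substring of $I$, this is $O(w(I))$.

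I expect the main obstacle to be this output-size bound, specifically the term $kN_O$. I would argue it via $N_O\le N_k$ together with the fact that the representation is never longer than spelling each necklace's $k$-mers from $I$. If that fails, I would fall back to the bound $(k-1)N_O\le w(I)$, which holds because each open root-path start is a distinct $k$-mer occurrence in $I$ and distinct $k$-mers need disjoint-enough witnesses.
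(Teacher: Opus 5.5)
Your counts of alphabet symbols and parentheses are correct, and they are the paper's argument made explicit. The paper simply asserts one symbol per $k$-mer, $k-1$ extra per open necklace, one pair per leaf, and one \$. Your per-pendant identity $L=1+\sum_v(c_v-1)$, together with one outer pair per pendant, is exactly the justification behind ``one pair per leaf'' that the paper leaves implicit.

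The gap is in the complexity part, at the step you yourself flagged. The bound $(k-1)N_O=O(w(I))$ is false for an arbitrary necklace cover, and both of your arguments for it fail.

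Take $I=\{\mathtt{A}^{k-1}\mathtt{C}\mathtt{A}^{k-1}\}$. Then $w(I)=2k-1$, and its $N_k=k$ windows of length $k$ are pairwise distinct, because the $\mathtt{C}$ sits at a different offset in each. Cover the dBG by $k$ singleton necklaces; each is a legitimate open necklace. Then $N_O=k$ and $N_L=0$, so the representation has $k^2+1$ symbols and cannot even be written down in $O(w(I))$ time.

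In this example $N_O\le N_k$ only gives the bound $kN_k$. The ``disjoint-enough witnesses'' claim is also wrong: distinct $k$-mers can overlap in $k-1$ positions of the same occurrence, so $N_k$ distinct $k$-mers fit into $N_k+k-1$ characters of $I$. The paper's proof says nothing about the time bound, so it offers no help here either.

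For an arbitrary cover, what you can prove is $O(w(I)+kN_O)$ time, i.e., input plus output. This is $O(w(I))$ if $k$ is treated as a constant.

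Alternatively, restrict to covers whose open necklaces are rooted at primitive nodes. The covers produced by Algorithm~\ref{alg:find_tree} are of this kind, since there $N_O$ equals the number of primitive nodes. A $k$-mer occurring at position $i\ge 2$ of an input string has the preceding $k$-mer as an in-neighbor. Hence every primitive node is the first $k$-mer of some input string, which gives $N_O\le |I|$. Since every input string has length at least $k$, it follows that $kN_O\le k|I|\le w(I)$.
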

\begin{proof}
    Each $k$-mer in $I$ adds one extra symbol, which gives $N_k$. However, open necklaces needs extra $k-1$ symbols, and each leaf adds an extra pair of parenthesis, which gives $(k-1) N_O + 2 N_L$. We then need one extra \$ symbol to separate open and closed necklaces.
\end{proof}

We can therefore define the \emph{cost of a necklace cover} to be the length of its BP representation (as given in Lemma~\ref{lemma:BP-representation-size}), and consequently a \emph{minimum necklace cover} as a necklace cover of minimum cost. 

\paragraph*{Separator-based Necklace Representation}
We also explore a separator-based representation for a necklace cover, we concatenate the BP representations of necklaces, separating them with $|$ as usual. There is only one thing left to take care of: we need to identify which necklaces are closed, and which are open. Since the order of necklaces in the representation is irrelevant, we can do this by using only one extra separator: we place all of the closed necklaces first, and place an extra character $|$ when switching to open necklaces (so, at that point, we have two vertical bars in a row).
For the previous example, the final separator-based representation for the necklace cover is given by \texttt{GG(C(A(A)(T)G(T)C)T)AC|TC(C)GT(C)||TGG(G)T}, and has length 42. Note that this is not a minimum necklace cover (the graph can be covered by a spanning tree = a single open necklace).

\begin{lemma}
\label{lemma:BP-representation-size-separators}
The parenthesis representation for a necklace cover of an input set $I$ can be computed in $O(w(I))$ time and space.  Let $N_k$ be the number of distinct $k$-mers in $I$. Let $N_C$ be the number of closed necklaces, $N_O$ be the number of open necklaces, and $N_L$ be the number of leaves over all the pendants in the necklace cover. The resulting representation uses $N_k + k N_O + 2 N_L + N_C$ symbols from the alphabet $\Sigma \cup \{ \mathtt{(}, \mathtt{\mid}, \mathtt{)}\}$, where $2 N_L$ corresponds to the number of parentheses.
\end{lemma}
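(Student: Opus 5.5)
We reuse the counting from the proof of Lemma~\ref{lemma:BP-representation-size} and only change how necklaces are delimited. The representation is the concatenation, over all necklaces, of their BP strings (closed necklaces first), with one $|$ between consecutive strings and one extra $|$ at the switch from closed to open. So we bound the characters of each BP string by kind and then add the separators.

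\textbf{Construction and running time.} We compute the BP string of each necklace exactly as in Lemma~\ref{lemma:BP-representation-size}, by a preorder traversal of the pendants hanging off the root path or cycle. The traversal emits labels and parentheses, omits the parentheses of the last child of each node, and truncates the $k-1$ wrap-around characters for cycles. Each node and edge of the necklace cover is handled a constant number of times, and there are $N_k \le w(I)$ nodes. Adding the separators costs $O(N_C+N_O)=O(N_k)$ more time. The total is therefore $O(w(I))$ time and space, assuming the cover and the dBG have been built in linear time as in the previous lemma.

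\textbf{Alphabet characters.} Each $k$-mer contributes exactly one symbol of $\Sigma$: the label of the edge entering it, for non-root nodes of pendants and root cycles/paths. The exception is the first node of an open root path, which contributes its full $k$ characters. A closed necklace's root cycle is a circular string of length equal to its number of nodes, so it has no overhead. This gives $N_k + (k-1)N_O$ symbols from $\Sigma$.

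\textbf{Parentheses.} After the last-child omission, pairs of parentheses are in bijection with leaves of pendants. Each omitted pair is the last child of some node; following last-children from each emitted pair's subtree, every leaf is ``charged'' to the unique pair whose last-child chain ends at it. This gives $2N_L$ symbols. We take this correspondence from the text preceding Lemma~\ref{lemma:BP-representation-size}, where it is asserted that each pair corresponds to a leaf. The only real subtlety is this bijection: one must check that pendants attached directly to the root path/cycle are always enclosed, since the root continuation plays the role of the last child. With that, every pendant leaf gets exactly one pair.

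\textbf{Separators.} There are $N_C+N_O$ strings, hence $N_C+N_O-1$ separators between them, plus one extra $|$ marking the switch from closed to open. This totals $N_C+N_O$. Summing gives $N_k+(k-1)N_O+2N_L+N_C+N_O = N_k+kN_O+2N_L+N_C$. The boundary cases (no open or no closed necklaces) are handled by the same convention of always writing the switch marker, which keeps the count exact.
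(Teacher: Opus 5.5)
Your proposal is correct and follows the paper's proof: you reuse the count from Lemma~\ref{lemma:BP-representation-size} ($N_k+(k-1)N_O$ letters and $2N_L$ parentheses) and replace the single \$ by $N_C+N_O$ separators ($N_C+N_O-1$ between necklaces, plus the closed-to-open switch marker). Your explicit last-child/leaf bijection and your note that the switch marker is always written fill in details the paper leaves implicit; they do not change the route.
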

\begin{proof}
The analysis is as in Lemma~\ref{lemma:BP-representation-size}, with the only difference given by the addition of separators instead of the \$ symbol. We need vertical bars to separate individual necklaces, plus one extra vertical bar to separate the closed necklaces from the open ones, which gives $N_C + N_O$ characters.
\end{proof}

\section{Finding a Minimum Necklace Cover}
\label{section:tree-necklace-algo}
Now that we have defined the cost of a necklace cover, we present an algorithm that finds the minimum such cover in the separator-oblivious model. We will later discuss the implications on the the separator-based model.

Our algorithm takes as input a PC cover, that is, a cover formed by paths and cycles, and outputs a necklace cover by greedily building necklace, attaching paths to partial necklaces and closing new cycles whenever possible.

\subsection{Minimum Path-and-Cycle Covers}
\label{section:min-pc-cover}
\gp{Our algorithm requires a PC cover as input: according to the type of cover, the performance of the algorithm may vary.}   

\gp{When representing the paths as strings, cycles correspond to circular strings. On a deBruijn graph, this allows us to save $k-1$ characters, as they are given by the circularity, with respect to paths. Intuitively, this means that the metric we are interested in is simply the number of paths, and we can allow any number of cycles.
Therefore, we define the \emph{cost} of a PC cover as the number of its paths, and consequently a \emph{minimum} PC cover as one with the minimum \emph{number of paths}, regardless of their number of cycles.} 

\gp{Note that each node of in-degree zero, called a \emph{primitive node}, must be the first node of a path in any PC cover. This type of path is called \emph{primitive}. Thus, if there are $p$ primitive nodes, then any PC cover must contain $p$ primitive paths, and has cost at least $p$.
The converse is not true, in the sense that a minimum PC cover can have cost strictly greater than $p$. For instance, if the graph is formed by a node with two disjoint exiting paths, we only have one primitive node, but we need two paths to have a cover.} 

\gp{One way to find a minimum PC cover is by using a maximum bipartite matching algorithm. A bipartite directed graph $(V_1\cup V_2, E_B)$ is a graph where the node set can be partitioned into two disjoint parts $V_1\cup V_2$, and edges only connect different parts: $(u,v)\in E_B $ implies $u\in V_1$ and $v\in V_2$. A matching of a bipartite graph is a set of edges that do not share endpoints (i.e., for any two edges of the matching $(u_1,v_1),(u_2,v_2)$ we have $u_1 \neq u_2$ and $v_1 \neq v_2$. A maximum matching is a matching of maximum size.  
Given an order-$k$ node-centric de Bruijn graph (dBG) $G=(V,E)$, we build a} bipartite graph $G'=(V_L\cup V_R,E')$ by creating two disjoint copies of the vertex set, denoted $V_L$ and $V_R$, together with bijections from the original vertex set to the copies, $f_L:V\to V_L$ and $f_R:V\to V_R$. For each directed edge $(u,v)\in E$, the bipartite graph contains an edge $(f_L(u),f_R(v))$, so that adjacency in $G$ is preserved across the bipartite graph.  
The next step is to compute a maximum bipartite matching on $G'$ \gp{(this can be done for instance in $\tilde{O}(|E| + |V|^{1.5})$ total running time using~\cite{BrandLNPSS0W20})}. Let $M'\subseteq E'$ denote the resulting matching; each edge in $M'$ is mapped back to the original edge set of $G$, thus producing a subset $F\subseteq E$. \gp{Let $G_F=(V,F)$ be the subgraph induced by the edges of the matching. Let $(V_1,E(V_1)),\ldots,(V_m,E(V_m))$ be its strongly connected components, and $(W_1,E(W_1)),\ldots,(W_n,E(W_n))$ be the remaining weakly connected components that are not strongly connected (where nodes that were unmatched form a trivial component by themselves).} 
\gp{Since each node touches at most two edges of the matching,} each $V_i$ corresponds to a cycle while each $W_j$ corresponds to a path, \gp{and they are all disjoint, yielding a PC cover of $G$}.

\begin{theorem}\label{thm:minpaths}
    Let $G=(V,E)$ be a node-centric de Bruijn graph, and $F\subseteq E$ be the edge set found by a maximum matching. Then, among the PC covers of $G$, the one induced by $G_F=(V,F)$ minimizes the number of open paths. 
\end{theorem}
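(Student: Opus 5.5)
The plan is to establish a bijection between PC covers of $G$ and matchings of the bipartite graph $G'$, under which the number of open paths equals $|V|-|M'|$. Minimizing open paths is then the same as maximizing the matching.

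First I would show that every matching yields a PC cover with the stated count. Let $M'$ be any matching of $G'$ and $F$ its preimage in $E$. In $G_F$ every node has out-degree at most one, because $f_L(u)$ is matched at most once, and in-degree at most one, because $f_R(v)$ is matched at most once. A digraph with all in- and out-degrees at most one decomposes into vertex-disjoint simple directed paths (possibly single nodes) and simple directed cycles. Self-loops of the dBG, which give edges $(f_L(u),f_R(u))$, appear as cycles of length one. This is the PC cover already described before the theorem.

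Next comes the counting step. A path component with $n_j$ nodes contributes $n_j-1$ edges of $F$, and a cycle with $m_i$ nodes contributes $m_i$ edges. Summing over components gives $|F| = |V| - (\#\text{paths})$, so the number of open paths is $|V|-|F| = |V|-|M'|$. The point to note is that each path is a component with exactly one fewer edge than nodes, while each cycle has as many edges as nodes.

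Then I would show the converse: every PC cover arises from some matching. Given any PC cover $\mathcal{P}$, let $F_{\mathcal{P}}$ be the set of edges used by its paths and cycles, namely consecutive pairs along each path and consecutive pairs along each cycle including the closing edge. Since the components are vertex-disjoint and each is a simple path or simple cycle, every node has at most one outgoing and at most one incoming edge in $F_{\mathcal{P}}$. Hence $\{(f_L(u),f_R(v)) : (u,v)\in F_{\mathcal{P}}\}$ is a matching of $G'$. By the same count, the number of open paths of $\mathcal{P}$ is $|V|-|F_{\mathcal{P}}|$.

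To conclude, for any PC cover $\mathcal{P}$ and maximum matching $M'$,
\[
\#\text{paths}(\mathcal{P}) = |V|-|F_{\mathcal{P}}| \ge |V|-|M'| = \#\text{paths}(G_F).
\]
So the cover induced by $G_F$ is minimum, in the sense of Section~\ref{section:min-pc-cover}.

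The main obstacle is a subtlety in the earlier description of components. Weakly connected components of $G_F$ that are not strongly connected must be exactly simple paths, and strongly connected ones must be exactly cycles. Both facts follow from the degree bounds: a weakly connected graph with all degrees at most one is either a directed path or a directed cycle, which I would verify by following out-edges from a node of in-degree zero if one exists, and otherwise arguing the component is a cycle. Nothing in the argument uses the de Bruijn structure, so the theorem holds for arbitrary digraphs.
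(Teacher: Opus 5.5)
Your argument is correct and is essentially the paper's: both rest on the identity that the number of open paths equals $|V|-|M'|$, and both conclude from the maximality of $M'$. The paper obtains this identity by counting path heads as the vertices whose right copy $f_R(v)$ is unmatched (its LX/XX types), whereas you count edges component by component; you also spell out the converse direction (every PC cover induces a matching of $G'$), which the paper uses only tacitly.
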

\begin{proof}
    We classify vertices $v\in V$ into 4 types with respect to matching state of $f_L(v)\in V_L$ and $f_R(v)\in V_R$. A vertex $v$ is LR-type iff there exist edges $e'_1=(f_L(v),u),e'_2=(u',f_R(v))\in M'$ (here, $u \neq u'$ because of the matching). \gp{That is, a vertex is of LR type if both of its copies, in the left and right side, were matched.} A vertex $v$ is LX-type iff there exist an edge $e'_1=(f_L(v),u)\in M'$ and $\forall e'=(u',u'')\in M',u''\neq f_R(v)$ (\gp{Only his copy on the left was matched)}. Analogously, a vertex $v$ is XR-type iff there exist an edge $e'_2=(u,f_R(v))\in M'$ and $\forall e'=(u',u'')\in M',u'\neq f_L(v)$ (\gp{Only his copy on the right was matched)}. A vertex $v$ is XX-type iff $\forall e'=(u,u')\in M',u\neq f_L(v)$ and $u'\neq f_R(v)$ (\gp{The node was unmatched on both sides)}. LX-type and XX-type vertices have no incoming edge in $G_F$. That is, they are the first vertices of open paths. Let $\nu$ be their number: as the number of LR-type and XR-type vertices equals to the matching size $|M'|$, we have $|V|=|M'|+\nu$. Since $M'$ is maximized, the number of open paths $\nu$ is minimized.
\end{proof}

Another possible way of obtaining a minimum PC cover could be by taking Eulertigs and ``closing obvious cycles'' (i.e., if a produced string for the SPSS was actually a cycle, consider it as such instead of as a path).

\paragraph*{Separator-based Minimum PC cover}
In the separator-based model, we need to take into account the number of cycles as well. Intuitively, cycles of the PC cover will yield closed necklaces; as such, we need to minimize them as well to go towards optimization of the cost given in Lemma~\ref{lemma:BP-representation-size-separators}. In this sense, we need to find a cover of minimum total size, that is, with the minimum number of both paths and cycles. Such a cover could be obtained using the output of Eulertigs and closing any obvious cycle (i.e., check if a given path was actually a cycle in the node-centric deBruijn graph). Indeed, Eulertigs finds the minimum SPSS based on the minimum possible number of breaking edges to make the graph Eulerian, each of which creates a new cycle/path. 

\subsection{Our Algorithm\gp{: \greedyNeckAlgo}}
\label{section:algorithm}
We are now ready to present our algorithm.
\gp{As discussed before, each primitive node of the graph necessarily introduces one primitive path. In turn, each such path must introduce an open necklace (as its root), and no algorithm can avoid this. Our greedy strategy is to ensure that primitive paths are the only sources of openness, while every other path is attached to an existing cycle or primitive path. At the same time, we want to minimize the number of leaves in the pendants, as this shortens the previously-introduced parenthesis representation. }

We present \gp{\greedyNeckAlgo (as shown in Algorithm~\ref{alg:find_tree}; additional pseudocode in Appendix~\ref{sec:additional-pseudocode})}, which takes as input a node-centric de~Bruijn graph $G=(V,E)$ together with its cover by cycles and open paths (without pendants): a set of cycles $\mathcal{C}$, a set of paths $\mathcal{P}$, and their adjacency lists $N$ (and inverse adjacency lists $N^{-1}$). \gp{Without loss of generality, we can assume that none of the paths can be closed into a cycle.}
The algorithm outputs a necklace cover, represented as a subset of edges $F \subseteq E$, such that the 
connected components of the induced subgraph $G_F=(V,F)$ correspond to the necklaces. Whether a necklace is open or closed can be determined directly from its in-degree in $G_F$: a necklace is closed if every vertex has in-degree one, and open otherwise \gp{(in which case, exactly one node has in-degree zero)}.

Intuitively, \gp{\greedyNeckAlgo} grows necklaces by starting from cycles and paths and iteratively attaching pendant paths as beads, repeating this process until no path remains.  
\gp{The greedy attachments guarantee that paths are absorbed in a necklace whenever their first node has an in-neighbor that belongs to that necklace. It could happen that we have a leftover set of paths that cannot attach to current necklaces. In this case, we show that part of the leftover paths contain a cycle between themselves, and they actually form a closed necklace that can be integrated without creating extra open components.}

The goal is to find a subset $F$ that minimizes the number of symbols $N_k + (k-1)N_O + 2N_L + 1$ as stated in Lemma~\ref{lemma:BP-representation-size}. \gp{We note that at each step of our algorithm, such cost is always decreased by at least 1. Indeed, when the algorithm attaches an open path to a current necklace, it is decreasing $N_O$ by one while increasing $N_L$ by one (adding a leaf) in the current cover; therefore, it is substituting the contribution of $(k-1)$ that $N_O$ had with a contribution of 2 for the new leaf. Whenever instead a cycle is detected, we are removing $h>2$ paths from the cover, decreasing $N_O$ by $h$ while adding an open necklace with $h$ leaves, thus again decreasing the cost by $(k-1)h$ but increasing it by $2h$.}

\begin{algorithm}
\caption{\greedyNeckAlgo}\label{alg:find_tree}
\KwIn{Cycles $\mathcal{C}$, paths $\mathcal{P}$, adjacency list $N$ and its inverse $N^{-1}$}
\KwOut{$F \subseteq E$ ($E$ is the edge set of node-centric de Bruijn graph)}

$F,\mathcal{P}_p,\mathcal{P}_n \gets \emptyset$ \tcp*{$\mathcal{P}_n$ is global}
\ForEach{$P = (v_1,\ldots,v_p)\in \mathcal{P}$}{
    \If{$N^{-1}(v_1) = \emptyset$}{ 
        $\mathcal{P}_p \gets \mathcal{P}_p \cup P$ \tcp*{$P$ is a primitive path}
    }
    \Else{
        $\mathcal{P}_n \gets \mathcal{P}_n \cup P$ \tcp*{$P$ is a non-primitive path}
    }
}
\tcc{build open necklaces from primitive paths}
\ForEach{$P = (v_1,\ldots,v_p)\in \mathcal{P}_p$}{ 
    \ForEach{$v_i \in P$}{
        \If{$i \neq p$}{
            $F \gets F \cup (v_i,v_{i+1})$\;
        }
        \ForEach{$u \in N(v_i)$}{
            \If{$\exists P^\prime = (v_1^\prime,\ldots,v_{p^\prime}^\prime)\in \mathcal{P}_n\:\mathrm{s.t.}\:u=v_1^\prime$}{
                $F \gets F\cup(v_i,u)\:\cup\:$\textsc{FindSubtree}$(P^\prime,N)$\;
            }
        }
    }
}
\tcc{build closed necklaces from cycles}
\ForEach{$C=(v_1,\ldots,v_c) \in \mathcal{C}$}{
    \ForEach{$v_i \in C$}{
        $F \gets F \cup (v_i,v_{(i+1)\%(c+1)})$\;
        \ForEach{$u \in N(v_i)$}{
            \If{$\exists P^\prime = (v_1^\prime,\ldots,v_{p^\prime}^\prime)\in \mathcal{P}_n\:\mathrm{s.t.}\:u=v_1^\prime$}{
                $F \gets F\cup(v_i,u)\:\cup\:$\textsc{FindSubtree}$(P^\prime,N)$\;
            }
        }
    }
}
\tcc{build closed necklaces with unused non-primitive paths}
\While{$\mathcal{P}_n\neq\emptyset$}{
    \ForEach{$P \in \mathcal{P}_n$}{
        $C\gets$\textsc{FindNewCycle}$(P)$\;
        \If{$C\neq\emptyset$}{
            \ForEach{$v_i \in C=(v_1,\ldots,v_c)$}{
                $F \gets F \cup (v_i,v_{(i+1)\%(c+1)})$\;
                \ForEach{$u \in N(v_i)$}{
                    \If{$\exists P^\prime = (v_1^\prime,\ldots,v_{p^\prime}^\prime)\in \mathcal{P}_n\:\mathrm{s.t.}\:u=v_1^\prime$}{
                        $F \gets F\cup(v_i,u)\:\cup\:$\textsc{FindSubtree}$(P^\prime,N)$\;
                    }
                }
            }
            \textbf{break}\;
        }
    }    
}
\Return $F$\;
\end{algorithm}

The algorithm proceeds in two main phases. First, it classifies the paths of $\mathcal{P}$ into two categories: {primitive paths} (\gp{whose first node is primitive}) 
and {non-primitive paths}, which can potentially be attached elsewhere.  
Primitive paths serve as the backbones of open necklaces and are stored in a set $\mathcal{P}_p$, whereas the remaining paths are stored in $\mathcal{P}_n$.  
Next, the algorithm recursively attaches the non-primitive paths in $\mathcal{P}_n$ to existing structures; namely, either to vertices of cycles, forming closed necklaces, or to vertices of primitive paths, extending open necklaces.  
This attachment process is performed by a subroutine called \textsc{FindSubtree}, which attempts to connect each path $P'=(v'_1,\ldots,v'_{p'})$ to a vertex $u$ of the current structure by adding the edge $(u,v'_1)$ together with all edges of $P'$. (For the sake of completeness, we report additional pseudocode in Appendix~\ref{sec:additional-pseudocode}.)

\begin{figure}[htbp]
    \centering
    \includegraphics[width= 0.75\linewidth]{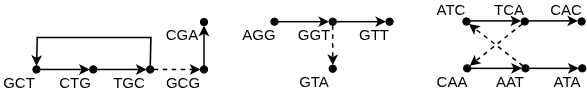}
    \caption{Example where \textsc{FindNewCycle} must be executed on the paths to the right\gp{: the two paths corresponding to \A\T\C\A\C~and \C\A\A\T\A~can be transformed into a closed necklace with base cycle \A\T\C\A\A~and pendants \C\A\C, \A\T\A}.}
    \label{fig:alg2needed}
\end{figure}

After all possible attachments to existing cycles and primitive paths have been made, it may still happen that some non-primitive paths remain unattached (see Figure~\ref{fig:alg2needed} for an example).  
In this case, the algorithm invokes the subroutine \textsc{FindNewCycle} (see Appendix~\ref{sec:additional-pseudocode}), which performs a depth-first traversal to detect a new cycle passing through vertices covered by the remaining paths.  
When such a cycle is found, it is added to $F$, and each path intersecting it is truncated to remove the portion overlapping with the cycle.  
The remaining suffixes are then recursively attached as pendants to this newly discovered cycle.  
This cycle-discovery and attachment process is repeated until all non-primitive paths have been incorporated. Lemma~\ref{lem:exist_cycle} guarantees that, as long as non-primitive paths remain, such a new cycle can always be found.

\begin{lemma}[Existence of a new cycle]\label{lem:exist_cycle}
During the execution of Algorithm~\ref{alg:find_tree}, suppose that some non-primitive paths remain, i.e. $\mathcal{P}_n \neq \emptyset$. Then there must exist a cycle $C=(v_1,\ldots,v_c)$ in $G$ with the property that every vertex $v_i$ of $C$ appears in at least one of the remaining non-primitive paths in $\mathcal{P}_n$.
\end{lemma}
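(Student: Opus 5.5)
The plan is to use a functional-graph (``follow the in-neighbor'') argument on the first nodes of the remaining paths. Consider the state of Algorithm~\ref{alg:find_tree} at a moment when $\mathcal{P}_n\neq\emptyset$. We start with an invariant: the vertex set $V$ is partitioned into two parts. The first part, $A$, consists of the vertices already absorbed into necklaces through $F$. These are the vertices of primitive paths, of cycles in $\mathcal{C}$, of previously found new cycles, and of every path attached by \textsc{FindSubtree}. The second part, $R$, consists of the vertices lying on the remaining non-primitive paths in $\mathcal{P}_n$. This holds because the input is a PC cover and every path is either primitive, in $\mathcal{C}$, or in $\mathcal{P}_n$ until it is attached.

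The key observation is the closure property obtained by the greedy phases. Every vertex of $A$ has been scanned by the loops ``\textbf{foreach} $u\in N(v_i)$'', either directly or inside \textsc{FindSubtree}, which recursively scans the vertices of attached paths. Whenever such a scan finds an out-neighbor $u$ that is the first node of a path in $\mathcal{P}_n$, that path is attached and removed. Hence, for every remaining path $P=(v_1,\ldots,v_p)\in\mathcal{P}_n$, no in-neighbor of $v_1$ lies in $A$. On the other hand, $P$ is non-primitive, so $N^{-1}(v_1)\neq\emptyset$. Therefore we can choose an in-neighbor $\pi(P)\in N^{-1}(v_1)\subseteq R$, which lies on some remaining path $\sigma(P)\in\mathcal{P}_n$.

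This defines a map $\sigma:\mathcal{P}_n\to\mathcal{P}_n$ on a finite nonempty set. Iterating it from any $P$ must eventually repeat, giving paths $P^{(1)},\ldots,P^{(h)}$ with $\sigma(P^{(j)})=P^{(j+1)}$ (indices mod $h$). The cycle $C$ in $G$ is then assembled by concatenating, for each $j$, the subpath of $P^{(j+1)}$ from its first node $v^{(j+1)}_1$ to the node $\pi(P^{(j+2)})$, followed by the edge $(\pi(P^{(j+2)}),v^{(j+2)}_1)$. The result is a closed walk all of whose vertices lie in $R$. We must handle the case $h=1$, where $\pi(P)$ lies on $P$ itself; this gives the cycle from $v_1$ to $\pi(P)$ and back. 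It is a genuine cycle unless the path could be closed, which is fine either way.

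The main obstacle is making the closed walk a simple cycle and pinning down the invariant precisely. Since the paths are vertex-disjoint and each contributes a contiguous prefix segment, distinct $j$ give disjoint vertex segments, provided the paths $P^{(j)}$ in the chosen repetition are distinct, which holds by taking the first repeat. So the walk is already simple. The delicate part is checking that the closure property survives the iterations of the \textbf{while} loop. After \textsc{FindNewCycle} truncates paths, the leftover suffixes become new members of $\mathcal{P}_n$ whose first nodes have an in-neighbor on the new cycle, so they get attached immediately. I would therefore state the invariant ``every vertex in $A$ has been scanned'' and verify it after each loop body, including the truncation step, before applying the $\sigma$-iteration argument.
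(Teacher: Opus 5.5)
Your proposal follows the same approach as the paper: by greediness, every in-neighbor of the head of a remaining path lies on another remaining path, so following these in-neighbors eventually revisits a path, and concatenating the head-prefixes of the paths in that repetition gives a cycle contained in the remaining paths. Your explicit invariant, that truncated suffixes are attached as soon as the new cycle is scanned, fills in what the paper leaves to ``greediness'', and your vertex-disjointness argument for simplicity is sound. There is one indexing slip. Since $\pi(P^{(j)})$ lies on $P^{(j+1)}$, the segment taken from $P^{(j+1)}$ should run from $v^{(j+1)}_1$ to $\pi(P^{(j)})$, followed by the edge $(\pi(P^{(j)}),v^{(j)}_1)$. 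In other words, the cycle traverses the $\sigma$-orbit backwards, as in the paper's listing $P_i,P_{i-1},\ldots,P_j$.
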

\begin{proof}
    Take any $P_1\in\mathcal{P}_n$. Due to greediness of Algorithm \ref{alg:find_tree}, \gp{since such path is not primitive but cannot be attached to other necklaces,} we are sure that $\forall u\in N^{-1}(P_1\mathrm{'s \:head}),\exists P\in\mathcal{P}_n\:\mathrm{s.t.}\:u\in P$. Take any such $u\in P\in\mathcal{P}_n$ and let $P_2=P$ and $h_2$ s.t. $P_2[h_2]=u$. We repeat this until $\exists j\in [1,i]\:\mathrm{s.t.}\:P_{i+1}=P_j$, recording each $h_i$ at the same time. In the worst case, $i$ would add up to $i=|\mathcal{P}_n|$, where $P_{i+1}$ must equate to one of $P_1,\ldots,P_i$ because otherwise the greediness is violated. Then, $P_i[1],\ldots,P_i[h_i],P_{i-1}[1],\ldots,P_{i-1}[h_{i-1}],\ldots,P_j[1],\ldots,P_j[h_j]$ is a cycle.
\end{proof}

At the end of the construction, every vertex of $G$ belongs to either a cycle or to a \emph{primitive} path extended with pendant trees, and the induced subgraph $G_F$ thus represents a collection of disjoint necklaces. Hence, $F$ defines a necklace cover of the original graph. The algorithm runs in quadratic time using linear space, but by maintaining a visit status for vertices it can be implemented in $O(|V|+|E|)$ time with $O(|V|)$ additional memory.

\gp{We prove that our algorithm correctly outputs a minimum necklace cover whenever the input PC cover is minimum (i.e., it has minimum number of paths):}

\begin{theorem}
\label{thm:halt_and_optimal}
Let $\mathcal{P} \cup \mathcal{C}$ be a minimum PC cover. Then Algorithm~\ref{alg:find_tree} outputs a necklace cover $F\subseteq E$ that minimizes the number of symbols $N_k + (k-1)N_O + 2N_L + 1$ (as stated in Lemma~\ref{lemma:BP-representation-size}). Its running time is $O(|V|+|E|)$.
\end{theorem}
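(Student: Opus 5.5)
The plan is to prove a lower bound on the cost of every necklace cover in terms of two graph invariants, and then to show that the output of Algorithm~\ref{alg:find_tree} meets that bound. Let $p$ be the number of primitive nodes (in-degree zero) and $\pi$ the number of paths in a minimum PC cover, so $|\mathcal{P}|=\pi$ by hypothesis. Since $N_k$ and the final $+1$ do not depend on the cover, I will rewrite the cost of Lemma~\ref{lemma:BP-representation-size} as
\[
N_k + (k-3)N_O + 2(N_O+N_L) + 1 ,
\]
and bound $N_O$ and $N_O+N_L$ from below separately.

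\emph{Lower bound.} Fix any necklace cover.
\begin{itemize}
\item Every primitive node has in-degree zero in the cover, so it is the root of an open necklace, and distinct primitive nodes lie in distinct open necklaces. Hence $N_O\ge p$.
\item I will decompose each necklace into paths and at most one cycle. For a closed necklace, keep the root cycle and peel each pendant into root-to-leaf paths by a heavy-path style decomposition: at each branching node, one child continues the current path and every other child starts a new path. This produces exactly one path per pendant leaf.
\item For an open necklace, the same peeling gives one path for the root path plus one extra path per pendant leaf.
\item The result is a PC cover with $N_O+N_L$ paths, so minimality of $\pi$ gives $N_O+N_L\ge\pi$.
\end{itemize}
For $k\ge 3$ the coefficient $k-3$ is nonnegative, so every necklace cover has cost at least $N_k+(k-3)p+2\pi+1$. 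The case $k\le 2$ needs a separate check, or is excluded as degenerate; I will note it explicitly.

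\emph{Matching upper bound for the algorithm.}
\begin{itemize}
\item Open necklaces arise only from primitive paths. Every non-primitive path is eventually absorbed, either attached greedily or absorbed into a cycle from \textsc{FindNewCycle}, which always exists while $\mathcal{P}_n\neq\emptyset$ by Lemma~\ref{lem:exist_cycle}. Hence $N_O=p$.
\item For the leaves, each attachment of a single path through \textsc{FindSubtree} creates at most one new leaf. A cycle built from $h$ remaining paths turns them into one closed necklace whose truncated suffixes hang as pendants, creating at most $h$ leaves.
\item Since each of the $\pi-p$ non-primitive paths is consumed exactly once, $N_L\le \pi-p$, so $N_O+N_L\le\pi$. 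Combined with the lower bound this is equality, and the output cost equals $N_k+(k-3)p+2\pi+1$, which is optimal.
\end{itemize}

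\emph{Main obstacle.} The delicate step is the leaf accounting in the \textsc{FindNewCycle} phase. Truncating paths that intersect the new cycle and reattaching their suffixes must not create more pendant leaves than the number of paths consumed. The worry is that a single path could be cut so that it yields two dangling pieces. I will argue that the cycle from Lemma~\ref{lem:exist_cycle} uses only prefixes $P_j[1..h_j]$, so each involved path contributes only its suffix after $h_j$, which is a single pendant hanging from $P_j[h_j]$.

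If attaching to a node that is already a leaf does not create a new leaf, the count only improves. This is consistent with the argument: such an improvement would contradict $N_O+N_L\ge\pi$, so in fact equality always holds.

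\emph{Running time.} Mark each vertex when it is absorbed, and keep an index from first nodes to paths so the existence test in the algorithm takes $O(1)$. Then every vertex and edge is scanned a constant number of times: in the attachment loops, and in the DFS of \textsc{FindNewCycle}, which never revisits marked vertices. This gives $O(|V|+|E|)$ total time.
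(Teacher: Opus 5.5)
\paragraph{Comparison with the paper's proof.}
Your proof is correct for $k\ge 3$, and it is organized differently from the paper's.

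The paper argues incrementally. Each attachment or cycle discovery removes one non-primitive path, giving up $k-1$ symbols from $(k-1)N_O$ for one new leaf costing $2$, so the cost never increases. It then asserts that the final $N_O=p$ is optimal and that the final $N_L=\pi-p$ is ``also optimal'' because the input PC cover is minimum. That last claim is not meaningful for $N_L$ alone, since $N_L=0$ is achieved by using the PC cover itself as a necklace cover. What is really needed is the bound $N_O+N_L\ge\pi$, which your peeling argument supplies and the paper never proves.

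The paper also never explains why the weighted sum is minimized when a competitor could trade a larger $N_O$ for a smaller $N_L$. Your rewriting $N_k+(k-3)N_O+2(N_O+N_L)+1$, combined with the two independent lower bounds, handles that trade-off explicitly. Your remark that attaching onto an existing leaf would violate $N_O+N_L\ge\pi$ also replaces the paper's appeal to non-concatenability of the input paths. The paper's step-by-step view buys only monotonicity of the cost during the run, which optimality does not need.

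Two small points:
\begin{itemize}
\item The cycle the algorithm actually uses comes from the DFS in \textsc{FindNewCycle}, not from the walk in the proof of Lemma~\ref{lem:exist_cycle}. That DFS also enters every path at its head, so your prefix/suffix leaf accounting still applies. Each suffix's head has an in-neighbor on the new cycle, so it is reattached in the same round.
\item For the running time, permanently marking the paths explored by a failed forward DFS could hide a cycle needed later. The backward walk of Lemma~\ref{lem:exist_cycle}, which always follows some in-neighbor of the current head, always succeeds. Its tail paths are absorbed as pendants immediately afterwards, which gives $O(|V|+|E|)$ cleanly. The paper itself only asserts this bound.
\end{itemize}

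\paragraph{The case $k\le 2$.}
Do not attempt a separate check here: the statement is false for $k=2$. An attachment saves $k-1=1$ symbol but costs $2$, so the paper's remark that ``since $k\ge 2$ the total does not increase'' is off by one.

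For a concrete counterexample, take the $2$-mers $\texttt{AC},\texttt{CG},\texttt{CT}$ with minimum PC cover $\{\texttt{AC}\to\texttt{CG},\ \texttt{CT}\}$. The algorithm returns the single open necklace $\texttt{AC(T)G}$, of cost $3+1+2+1=7$. Keeping the two paths as open necklaces $\texttt{ACG}$ and $\texttt{CT}$ costs only $3+2+0+1=6$. So the restriction $k\ge 3$ in your argument is necessary, not merely convenient.
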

\begin{proof}
As previously mentioned, the number $N_O$ of open necklaces in any cover must be at least the number of primitive nodes, since each primitive path necessarily induces one.
\gp{By minimality of the cover, the number of starting non-primitive paths is minimized (primitive paths are a fixed number given by the graph's topology), and, furthermore, no two paths in the set $\mathcal{P}$ can be concatenated (the last node of a path is never an in-neighbor of the starting node of another path). Because of this and of Lemma~\ref{lem:exist_cycle},}
at every iteration of Algorithm~\ref{alg:find_tree}, both when we perform the attachment operation or when we discover cycles, the number of non-primitive paths decreases by at least one. Consequently, $N_O$ is reduced by at least one, and the value of the formula decreases by at least $k-1$. For each such decrement, one new leaf is created (in both cases of attaching the path or creating a new cycle with paths attached, the last nodes of the paths become leaves), adding a cost of $2$. Since $k \geq 2$, the total number of symbols does not increase (and surely decreases for $k \geq 3$). 
At the end, $N_O$ equals the number of primitive paths, which is optimal. 
\gp{Since the starting number of paths of $\mathcal{P}$ was minimal (the input PC cover is minimal)}, the produced value of $N_L$, which is equal to the number of said paths that are not primitive, is also optimal.
\end{proof}

Theorem~\ref{thm:halt_and_optimal} shows that our greedy algorithm always halts and, \gp{given a minimum PC cover}, produces an optimal necklace cover. Note that minimizing the number $N_L$ of leaves alone can be achieved by a simple greedy algorithm. In contrast, Algorithm~\ref{alg:find_tree} minimizes the entire formula of Lemma~\ref{lemma:BP-representation-size}.

Armed with our greedy Algorithm~\ref{alg:find_tree}, we can obtain an optimal necklace cover by applying it to the minimum PC cover obtained in Section~\ref{section:min-pc-cover}. 
The dominant time complexity in the overall time cost is that stated in Theorem~\ref{thm:minpaths}.

\paragraph*{Separator-based representation}
If we take into account the separators then the cost of a necklace cover changes, and the number of closed necklaces start to matter (see Lemma~\ref{lemma:BP-representation-size-separators}).

In this case, we can show that our algorithm is optimal if the input PC cover minimizes the separator-aware value $N_O+N_C$, instead of only minimizing the number of paths:
\begin{theorem}\label{thm:halt_and_optimal-separators}
    Suppose that the input PC cover minimizes the number of paths and cycles.
    Then Algorithm~\ref{alg:find_tree} outputs a necklace cover $F\subseteq E$ that minimizes the number of symbols $N_k + kN_O + 2N_L + N_C$ (as stated in Lemma~\ref{lemma:BP-representation-size-separators}). Its running time is $O(|V|+|E|)$.
\end{theorem}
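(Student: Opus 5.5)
We follow the proof of Theorem~\ref{thm:halt_and_optimal}, but now track $N_C$ as well, since the cost $N_k + kN_O + 2N_L + N_C$ of Lemma~\ref{lemma:BP-representation-size-separators} charges one symbol per closed necklace. The plan is to establish a lower bound on the quantity $kN_O + 2N_L + N_C$ for every necklace cover, and then show that the output of Algorithm~\ref{alg:find_tree} matches it.

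\emph{Step 1: a necklace cover yields a PC cover.} Given any necklace cover, we extract a PC cover as follows. Each closed necklace keeps its root cycle, and each open necklace keeps one root-to-leaf path. For every remaining leaf, we take the maximal path up from that leaf through the not-yet-used nodes. This decomposes every necklace into its root (a path or cycle) plus exactly one path per pendant leaf. The resulting PC cover has $N_C$ cycles and $N_O + N_L$ paths, so $N_C + N_O + N_L \ge m$, where $m$ is the minimum number of paths plus cycles over all PC covers. We also have $N_O \ge p$, where $p$ is the number of primitive nodes. Writing the cost as $N_k + (k-2)N_O + 2(N_O + N_L + N_C) - N_C$ is not directly monotone, so instead we bound
\[
kN_O + 2N_L + N_C \;=\; (k-2)N_O + 2(N_O+N_L+N_C) - N_C .
\]
We then argue that $N_C$ can be absorbed as follows. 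Every closed necklace of a cover produced from the same PC cover consumes one cycle, so the natural bound is $kN_O + 2N_L + N_C \ge (k-2)p + 2m - c$, where $c$ is the number of cycles. Since this expression still contains $c$, I would instead compare directly against the output.

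\emph{Step 2: accounting for the algorithm's output.} Start from an input cover with $c_0$ cycles and $q_0$ paths, where $c_0 + q_0 = m$ is minimal. Each attachment of a non-primitive path converts one open necklace into one leaf, changing the cost by $-k+2$. Each \textsc{FindNewCycle} step on $h$ paths removes $h$ open components, creates $h$ leaves and one new closed necklace, changing the cost by $-kh + 2h + 1$. At the end we have $N_O = p$, $N_L = q_0 - p$, and $N_C = c_0 + t$, where $t$ is the number of new cycles found. The main obstacle is to show that $t$ is forced, i.e.\ that no cover does better. I would first try to prove that $t = 0$ when the input minimizes paths plus cycles. The idea is that each new cycle found by \textsc{FindNewCycle} through $h \ge 2$ leftover paths could be used to build a PC cover with fewer total components: replace those $h$ paths by the cycle plus the $h$ suffix paths, which gives the same count. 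Because this exchange yields the same count rather than a smaller one, the argument has to proceed differently. The fallback is to compare via Step 1: any optimal necklace cover decomposes into a PC cover with $N_O + N_L + N_C \ge m$ and $N_O \ge p$. The output achieves $N_O = p$ and $N_L + N_C = m - p + t$. So I would need to show that an optimal cover likewise cannot avoid those $t$ cycles. The intended route is that the leftover paths at that stage have all in-neighbours of their heads inside leftover paths, so any cover must close a cycle inside that set, and each such forced cycle is counted once.

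\emph{Step 3: running time.} The $O(|V|+|E|)$ bound is inherited verbatim from Theorem~\ref{thm:halt_and_optimal}, since the algorithm itself is unchanged.
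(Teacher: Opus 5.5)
\textbf{The gap.} Your proposal does not prove optimality. Step~1 gives no usable lower bound; you note yourself that the $-N_C$ term breaks monotonicity. Step~2 then rests entirely on showing that the $t$ cycles created by \textsc{FindNewCycle} are unavoidable, and this is only sketched. The sketch is not valid. The greedy phase guarantees only that the in-neighbours of the \emph{heads} of leftover paths lie in leftover paths. Interior nodes of those paths may have in-neighbours outside the leftover set, so another cover can hang the leftover nodes from an existing necklace without closing any cycle. (A minor point: $N_L=q_0-p$ also ignores leftover paths that are entirely absorbed into a new cycle.)

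\textbf{This step cannot be repaired, because the statement fails.} Take $k=3$ and $I=\{\texttt{TCGACG},\texttt{GACT}\}$.
The node-centric dBG has edges \texttt{TCG}$\to$\texttt{CGA}$\to$\texttt{GAC}, \texttt{GAC}$\to$\texttt{ACG}, \texttt{GAC}$\to$\texttt{ACT} and \texttt{ACG}$\to$\texttt{CGA}. Only \texttt{TCG} is primitive.

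Every PC cover has at least two paths. \texttt{TCG} starts one. \texttt{ACG} and \texttt{ACT} share the unique in-neighbour \texttt{GAC}, so one of them must start another.

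Hence $\{(\texttt{TCG}),(\texttt{ACG},\texttt{CGA},\texttt{GAC},\texttt{ACT})\}$, with two paths, no cycles and no closable path, satisfies the hypothesis. Its second path cannot be attached, since the only in-neighbour of its head lies on it. The algorithm therefore closes the cycle \texttt{ACG},\texttt{CGA},\texttt{GAC} with pendant \texttt{ACT}, giving $N_O=N_C=N_L=1$ and cost $5+3+2+1=11$.

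Compare the single open necklace \texttt{TCG}$\to$\texttt{CGA}$\to$\texttt{GAC}$\to\{$\texttt{ACG},\texttt{ACT}$\}$. It has $N_O=N_L=1$ and $N_C=0$, so cost $10$. This is exactly what the algorithm returns from the other minimum cover $\{(\texttt{TCG},\texttt{CGA},\texttt{GAC},\texttt{ACT}),(\texttt{ACG})\}$.

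Here the head \texttt{ACG} has all its in-neighbours in the leftover set. Yet the interior node \texttt{CGA} has the outside in-neighbour \texttt{TCG}, which is precisely what defeats your intended route.

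\textbf{The paper's proof has the same flaw.} It argues lexicographically: first $N_O$, then $N_L$, and finally that $N_C$ is minimal because a new cycle is created only when a path ``can only be joined with itself''. The same example refutes that last assertion, since whether this happens depends on which minimum input cover is chosen. The statement needs a stronger hypothesis or a weaker conclusion, not just a better proof.
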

\begin{proof}
    First, we observe that  the number of open necklaces $N_O$ in any cover must be at least the number of primitive paths, since each primitive path necessarily induces one.
    By Lemma~\ref{lem:exist_cycle}, at every iteration of Algorithm~\ref{alg:find_tree}, the number of non-primitive paths decreases by at least one. Consequently, $N_O$ is reduced by at least one, and the value of the formula decreases by at least $k$. For each such decrement, one new leaf is created, additionally one new cycle may be created, thus adding a cost of $2$ or $3$, respectively. Since $k \geq 3$, the total number of symbols does not increase (and surely decreases for $k \geq 4$). At the end, $N_O$ equals the number of primitive paths, which is optimal.  
    Second, we observe that all non-primitive paths become leaves or parts of cycles. After $N_O$ becomes optimal, $N_L$ is also optimal (given $N_O$) because the set of paths is irreducible, and each non-primitive path must become a leaf (it cannot be joined to another path) if it is not entirely subsumed by a cycle.  Finally, after $N_O$ and $N_L$ become minimal,
    $N_C$ is minimized as a new cycle is created only if a non-primitive path can only be joined with itself instead of another path. Hence, the total number of symbols is the smallest possible among all necklace covers.
\end{proof}

\subsection{Comparison with SPSS}
\label{section:eulertigs-comparison}
In this section, we show that there exist infinite families of input string sets for which our necklace cover yields smaller representations than \gp{the SPSS,} for instance given by Eulertigs.  
The key idea is to construct a set of strings whose corresponding node- and edge-centric de~Bruijn graphs consist of a cycle with single-node pendants.  
In such cases, our method produces a single closed necklace, whereas any Eulertigs-based solution requires a number of strings equal to the number of pendants. Our necklace cover requires only a fraction $4/(k+1)$ of the symbols needed by the Eulertigs representation, that is, $4n$ versus $(k+1)n$ symbols, for infinitely many values of $n = |\Sigma|^{k-2}$ with $k \geq 4$.

Given any $k$, let us consider a de~Bruijn sequence $S$ of order $k-2$ over $\Sigma$, that is, a cyclic string in which every distinct $(k-2)$-mer occurs exactly once.  
For example, for $k=4$, a de~Bruijn sequence of order $k-2=2$ is 
\texttt{ACTAGATCCGTTGGCA}.  Any such de~Bruijn sequence has length $n = |\Sigma|^{k-2}$ (and there are exponentially many distinct sequences of this length). Let $\alpha$ denote the first $k$-mer of $S$, and define the string $X = S\alpha$ as one of the input strings.  
Note that both the node-centric and edge-centric de~Bruijn graphs of order $k$ corresponding to $X$ form a simple cycle. Indeed, since all $(k-2)$-mers of $S$ are distinct, so are its $(k-1)$-mers and $k$-mers. By appending the first $k$-mer again at the end, we ensure that the new $k$-mers and $(k-1)$-mers introduced occur only at the beginning of $S$, thereby forming a cycle.  
For the above de~Bruijn sequence ($k=4$), we obtain  
$
X = \texttt{\underline{ACTA}GATCCGTTGGCA\underline{ACTA}}
$.

Next, we attach one-node pendants to each node of the cycle.  
We do this by adding $n$ additional strings to the input set as follows.  
For each $k$-mer $x_1 \cdots x_k$ of $X$, it has exactly one successor in $X$ (its out-neighbor in the node-centric de~Bruijn graph), given by some $x_2 \cdots x_k a$.  
For any $b \neq a$, we add the string $x_1 \cdots x_k b$ to the input set.  
By construction, neither these new $k$-mers nor their $(k-1)$-mers occur as substrings of any other input string.  
For instance, in the string $X$ above, the only $k$-mer (resp.\ $(k-1)$-mer) following $\texttt{TCCG}$ (resp.\ $\texttt{CCG}$) is $\texttt{CCGT}$ (resp.\ $\texttt{CGT}$).  
Hence, we can safely add $\texttt{CCGA}$ to the input set without introducing any repeated $k$- or $(k-1)$-mers.

Given an input set of strings constructed as described above, both the corresponding node- and edge-centric de~Bruijn graphs consist of a cycle of length $n$ (corresponding to string $X$), with one pendant attached to each node of the cycle (corresponding to the $k$-mers of the form $a_2\cdots a_k b$).  
Hence, both the node- and edge-centric graphs contain $2n$ nodes and $2n$ edges.  
(In general, we could have up to $(|\Sigma|-1)n$ pendants if desired.)  

Our necklace-cover representation in this case consists of a single closed necklace, using exactly $2n$ symbols and $2n$ parentheses.  
The Eulertigs representation, on the other hand, must necessarily include $n$ breaking edges (one per pendant), resulting in $(k-1)n + 2n$ symbols.  
By increasing $k$ and $n$, the advantage of our representation becomes arbitrarily large, as it is always $(k-1)n$ symbols smaller than the Eulertigs representation.  

A full example for $k=4$ is given in Figure~\ref{fig:simple-necklace-stringset}, for input string set $I= \{X=\mathtt{ACTAGATCCGTTGGCAACTA}$, $\mathtt{ACTAC}$, $\mathtt{CTAGG}$, $\mathtt{TAGAC}, \mathtt{AGATA}, \mathtt{GATCT}, \mathtt{ATCCC}, \mathtt{TCCGG}, \mathtt{CCGTA}, \mathtt{CGTTA}, \mathtt{GTTGT}, \mathtt{TTGGA}, \mathtt{TGGCG}, \mathtt{GGCAT}$, $\mathtt{GCAAA}$, $\mathtt{CAACG}$, $\mathtt{AACTT} \}$.

\begin{figure}
    \centering 
    \includegraphics[width=7cm]{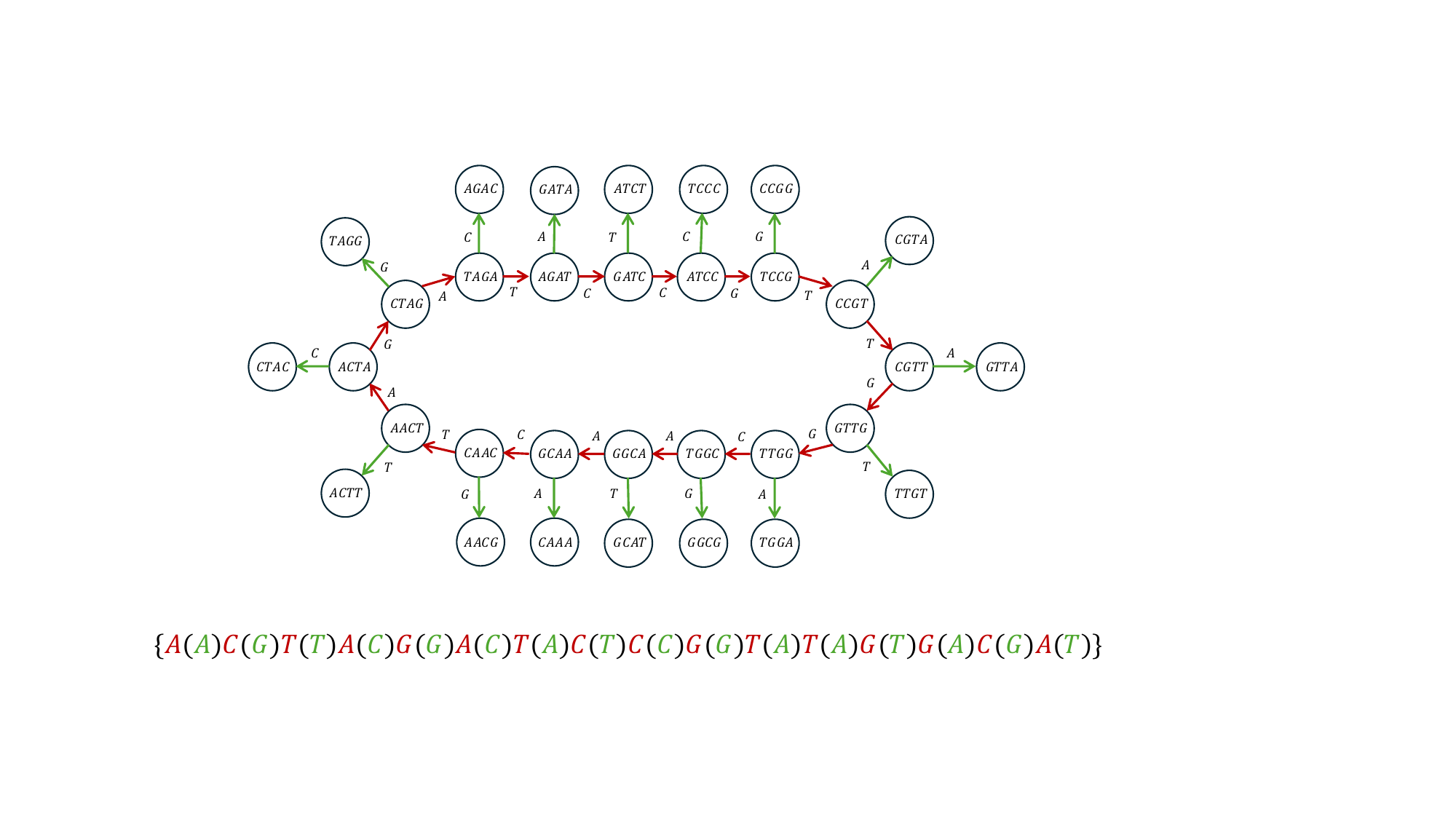} \hfill
     \includegraphics[width=6.5cm]{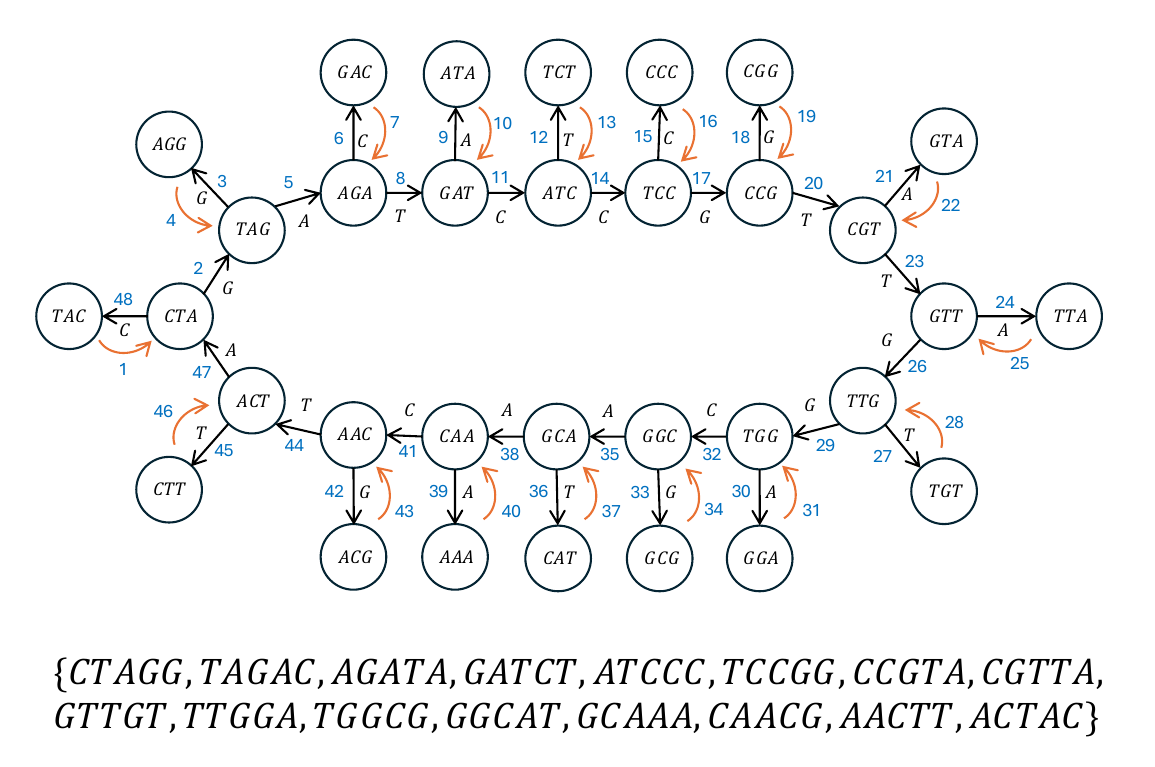}
    \caption{Consider the input string set $I = \{ X =$ \texttt{ACTAGATCCGTTGGCAACTA, ACTAC, CTAGG, TAGAC, AGATA, GATCT, ATCCC, TCCGG, CCGTA, CGTTA, GTTGT, TTGGA, TGGCG, GGCAT, GCAAA, CAACG, AACTT}$\}$. This input set of strings follows the construction of Section~\ref{section:eulertigs-comparison}, for $k=4$ and $n=|\Sigma|^{k-2}=4^2=16$. 
    For this input, we obtain a simple closed necklace as de~Bruijn graph both in the node-centric (top) and edge-centric (bottom) case. On the top, we see our simple necklace solution for this graph, with the parenthesis representation, leading to $2n= 32$ symbols, plus $2n=32$ parentheses for representation, for a total of $64$ characters. 
    On the bottom, we see an Eulertigs solution (breaking arcs in orange), leading to $n(k+1) = 80$ plaintext characters.}
    \label{fig:simple-necklace-stringset}
\end{figure}

\section{Experiments}
\label{section:experiments}
We conducted experiments to evaluate the practical impact of using necklaces in our $k$-mer representations, on a machine equipped with an Intel(R) Xeon(R) E5-1620 v4 CPU (3.50 GHz, 4 cores, 8 threads) and 185 GiB of RAM.

\smallskip
\noindent
\textbf{Datasets.}
We used four datasets in FASTA format: \textit{Chr19} (57~MB, reference), \textit{C.~Elegans} (15.3~MB, reads), \textit{B.~Mori} (493~MB, reads), and \textit{H.~Sapiens} (3.34~GB, reads).  
These datasets are available in the NCBI RefSeq database (\url{https://www.ncbi.nlm.nih.gov/datasets/genome/}) under accession IDs \texttt{GCF\_000001405.40}, \texttt{GCF\_000002985.6}, \texttt{GCF\_000151625.1}, and \texttt{GCF\_000001405.39}, respectively. 
We set $k \in \{11,13,\ldots,29,31\}$ for our study on their $k$-mers.

\smallskip
\noindent
\textbf{Compared methods.}
We compared our proposed method against \gp{a fully greedy DFS-based non-optimal variant (\greedyBaseline), and the two state-of-the-art competitors}: 
Eulertigs~\cite{schmidt2023eulertigs}
and Masked Superstring~\cite{sladky2023masked}, measuring both output size \gp{(as the cumulative length of the resulting representation, without separators)} and execution time. The implementation of \texttt{MaskedSuperstring} was obtained from \url{https://github.com/OndrejSladky/kmercamel}.\footnote{We also consider the state-of-the-art  \texttt{MaskedSuperstring}, even if it does not solve the SPSS problem exactly, as it may construct a superstring containing extra $k$-mers not present in the input set $I$.}  
We implemented \texttt{Eulertigs} and our proposed methods in C:
\implementedNeckCover is our proposed algorithm as presented in Section~\ref{section:algorithm} with input given by the minimum PC cover yielding from a maximum bipartite matching (as per Section~\ref{section:min-pc-cover}), while
\texttt{greedyBaseline} is a heuristic to directly yield necklaces, iterating depth-first search from any unvisited node in the graph until every node is visited, \gp{and later closing necklaces whenever possible.} 
The source code is available at \url{https://github.com/ren-kimura/Necklace}, along with additional implementations.

\smallskip
\noindent
\textbf{Measure of the input/output size.}
To compare the output size, we adopt as metric the number of characters over the alphabet $\Sigma \cup \{(,)\}$ used to represent the $k$-mers in the DNA sequences. We ignore the overhead introduced by the FASTA format, since it is unlikely to be identical across tools: FASTA record headers may contain different amounts of information depending on the program that generates them. When comparing space occupancy, we always refer to this measure. Accordingly, we post-process the output of the compared methods so that we count the number of produced characters under the same convention as for our proposed methods. Specifically, we count characters from the alphabet $\Sigma$ for 
\texttt{Eulertigs}, and \texttt{MaskedSuperstring}, and characters from $\Sigma \cup \{(,)\}$ for \texttt{greedyBaseline} and \implementedNeckCover. For consistency, we measure the input size simply as the total number of DNA bases in the input files.

\smallskip
\noindent
\textbf{Results.}
Table~\ref{tab:o/i}  summarizes the key statistics for compression ratios. 
As we are mainly interested in worst-case behavior, we note that the best maximum output size/input size ratios are comparably achieved by
\texttt{MaskedSuperstring} and \implementedNeckCover.

\begin{table}[htbp]
    \begin{minipage}{0.90\linewidth}
        \begin{tabular}{|l||c|c||c|c|}
            \hline
            & \greedyBaseline 
            & \implementedNeckCover 
            & \texttt{Eulertigs} 
            & \texttt{MaskedSuperstring}
            \\ \hline
            max 
            & 1.012 
            & {0.966} 
            & 1.178 
            & 0.976
            \\\hline
            min 
            & 0.482 
            & 0.234 
            & 0.348 
            & 0.232
            \\\hline
        \end{tabular} 
    \end{minipage}
    \caption{Output size/input size ratio for each method. The values correspond to the maximum and minimum ratios observed across \textit{C.Elegans} datasets (for all $k$).}
    \label{tab:o/i}
 \end{table}

Further experimental results are reported in Tables~\ref{tab:size1} and \ref{tab:size2} (output size), 
and in Tables~\ref{tab:time1} and \ref{tab:time2} (running time), for increasing values of $k$.  
Our method \implementedNeckCover achieve the best overall balance between time and space, producing smaller outputs for sufficiently large values of $k$.  
The smallest space occupancy was obtained either by \texttt{MaskedSuperstring} or by our \implementedNeckCover representation, depending on the value of $k$.  
For small $k$, \texttt{MaskedSuperstring} produced the most compact outputs, whereas \implementedNeckCover achieved comparable compression.  
For larger $k$, \implementedNeckCover achieved the smallest space occupancy among all methods.  

In terms of execution time, \texttt{MaskedSuperstring} was consistently the fastest algorithm.
It should be noted, however, that \texttt{MaskedSuperstring} does not solve the Spectrum-Preserving String Set (SPSS) problem exactly: because it constructs a superstring containing all input $k$-mers, it may also introduce \emph{false positives}, i.e., $k$-mers that do not occur in the original sequences.  
In contrast, \implementedNeckCover, as well as Eulertigs, guarantee an exact $k$-mer spectrum.  
Therefore, \implementedNeckCover offer a favorable trade-off between space efficiency and computational cost while maintaining exactness of the represented spectrum for the SPSS problem.

\begin{sidewaystable}
\small
{\setlength{\tabcolsep}{3.5pt}\setlength{\extrarowheight}{2pt}
\textit{Chr19} (58440758 symbols, chromosome)

\begin{tabular}{lrrrrrrrrrrr}
& $k=11$ & 13 & 15 & 17 & 19 & 21 & 23 & 25 & 27 & 29 & 31\\
\hline
\greedyBaseline & 9114631 & 44179571 & 47447579 & 43994652 & 44640578 & 45850545 & 47026645 & 48102585 & 49068064 & 49936216 & 50718775\\
\implementedNeckCover & \textbf{3974031} & \textbf{25983959} & \textbf{39900863} & \textbf{42266186} & \textbf{43844638} & \textbf{45242355} & \textbf{46496097} & \textbf{47647805} & \textbf{48676314} & \textbf{49597844} & \textbf{50414307}\\
\hline
\texttt{Eulertigs} & 4954903 & 46377639 & 58882919 & 51469114 & 51481358 & 52840317 & 54153977 & 55520901 & 56622402 & 57524646 & 58105179\\
\texttt{MaskedSuperstring} & 3990364 & 26092857 & 40734878 & 43672574 & 45585771 & 47187960 & 48554570 & 49767823 & 50804790 & 51668166 & 52423894\\
\hline
\end{tabular}

\medskip

\textit{C. Elegans} (15072434 symbols, read)

\begin{tabular}{lrrrrrrrrrrr}
& $k=11$ & 13 & 15 & 17 & 19 & 21 & 23 & 25 & 27 & 29 & 31\\
\hline
\greedyBaseline & 7263909 & 15254846 & 15069901 & 14511520 & 14362808 & 14366588 & 14408979 & 14457002 & 14499663 & 14536507 & 14568703\\
\implementedNeckCover & 3527615 & \textbf{10613396} & \textbf{13329033} & \textbf{14003702} & \textbf{14213522} & \textbf{14314850} & \textbf{14384813} & \textbf{14441330} & \textbf{14488093} & \textbf{14527585} & \textbf{14561497}\\
\hline
\texttt{Eulertigs} & 5241719 & 17281476 & 17757213 & 16112536 & 15286914 & 15021278 & 14961873 & 14961828 & 14967589 & 14973693 & 14978305\\
\texttt{MaskedSuperstring} & \textbf{3492706} & 10735826 & 13599853 & 14207925 & 14383726 & 14479500 & 14546470 & 14598426 & 14639866 & 14674618 & 14703609\\
\hline
\end{tabular}
\caption{Output size [symbols]}
\label{tab:size1}
}

\bigskip
\textit{Chr19} (58440758 symbols, chromosome)

\begin{tabular}{lrrrrrrrrrrr}
& $k=11$ & 13 & 15 & 17 & 19 & 21 & 23 & 25 & 27 & 29 & 31\\
\hline
\greedyBaseline & 14.61 & 44.01 & 61.29 & 66.05 & 70.03 & 70.12 & 74.8 & 74.91 & 78.3 & 76.12 & 80.36\\
\implementedNeckCover & 20.2 & 84.45 & 116.21 & 116.45 & 123.26 & 123.41 & 133.3 & 132.82 & 146.1 & 137.01 & 151.59\\
\hline
\texttt{Eulertigs} & 14.87 & 50.17 & 76.87 & 84.12 & 85.59 & 97.34 & 92.24 & 93.37 & 97.34 & 107.38 & 97.26\\
\texttt{MaskedSuperstring} & \textbf{3.95} & \textbf{22.46} & \textbf{26.62} & \textbf{22.72} & \textbf{23.04} & \textbf{24.09} & \textbf{25.2} & \textbf{26.38} & \textbf{27.21} & \textbf{27.58} & \textbf{25.8}\\
\hline
\end{tabular}

\medskip

\textit{C. Elegans} (15072434 symbols, read)

\begin{tabular}{lrrrrrrrrrrr}
& $k=11$ & 13 & 15 & 17 & 19 & 21 & 23 & 25 & 27 & 29 & 31\\
\hline
\greedyBaseline & 6.69 & 14.44 & 18.43 & 19.5 & 18.75 & 18.91 & 19.83 & 21.18 & 20.86 & 21.06 & 20.79\\
\implementedNeckCover & 11.72 & 29.36 & 34.04 & 36.46 & 32.1 & 33.77 & 32.28 & 32.14 & 35.39 & 35.48 & 37.45\\
\hline
\texttt{Eulertigs} & 7.3 & 19.15 & 23.33 & 25.14 & 26.25 & 26.29 & 26.84 & 27.36 & 25.34 & 25.69 & 25.87\\
\texttt{MaskedSuperstring} & \textbf{2.3} & \textbf{6.87} & \textbf{6.72} & \textbf{5.12} & \textbf{4.66} & \textbf{4.69} & \textbf{4.55} & \textbf{4.57} & \textbf{4.59} & \textbf{4.47} & \textbf{4.31}\\
\hline
\end{tabular}
\caption{Running time [seconds]}
\label{tab:time1}
\end{sidewaystable}

\begin{sidewaystable}
\small
{\setlength{\tabcolsep}{1.5pt}\setlength{\extrarowheight}{2pt}
\textit{B. Mori} (431723578 symbols, read)

\begin{tabular}{lrrrrrrrrrrr}
& $k=11$ & 13 & 15 & 17 & 19 & 21 & 23 & 25 & 27 & 29 & 31\\
\hline
\greedyBaseline & 10485741 & 136816027 & 335483952 & 335366591 & 321193521 & 321944384 & 326697741 & 331837062 & 336735380 & 341305892 & 345573842\\
\implementedNeckCover & \textbf{4194313} & 63001757 & \textbf{225933314} & \textbf{293233481} & \textbf{309783011} & \textbf{317694506} & \textbf{324042359} & \textbf{329703454} & \textbf{334884670} & \textbf{339664584} & \textbf{344094966}\\
\hline
\texttt{Eulertigs} & 4194385 & 94391627 & 404049470 & 414476071 & 370966275 & 361147982 & 364353879 & 370086060 & 375970294 & 381474300 & 386463782\\
\texttt{MaskedSuperstring} & 4202555 & \textbf{62108078} & 227323845 & 298953443 & 317846884 & 327155750 & 334524320 & 340975936 & 346769657 & 352025740 & 356819381\\
\hline
\end{tabular}

\medskip

\textit{H. Sapiens} (3136819154 symbols, read)

\begin{tabular}{lrrrrrrrrrrr}
& $k=11$ & 13 & 15 & 17 & 19 & 21 & 23 & 25 & 27 & 29 & 31\\
\hline
\greedyBaseline & 10482963 & 153886427 & 1239407140 & \xx & \xx & \xx & \xx & \xx & \xx & \xx & \xx\\
\implementedNeckCover & \textbf{4194861} & \textbf{64831379} & \textbf{620166986} & \textbf{1951835859} & \textbf{2308609839} & \xx & \xx & \xx & \xx & \xx & \xx\\
\hline
\texttt{Eulertigs} & 4200829 & 77130469 & 1049414570 & 3747907014 & \xx & \xx & \xx & \xx & \xx & \xx & \xx\\
\texttt{MaskedSuperstring} & 4225846 & 65188060 & {641731970} & {1966819246} & {2349162755} & \textbf{2439174597} & \textbf{2497770866} & \textbf{2545958853} & \textbf{2587140480} & \textbf{2622694500} & \textbf{2653704607}\\
\hline
\end{tabular}
\caption{Output size [symbols]\quad(Gray cells: Infeasible due to memory constraints)}
\label{tab:size2}
}

\bigskip
\textit{B. Mori} (431723578 symbols, read)

\begin{tabular}{lrrrrrrrrrrr}
& $k=11$ & 13 & 15 & 17 & 19 & 21 & 23 & 25 & 27 & 29 & 31\\
\hline
\greedyBaseline & 86.86 & 209.52 & 448.88 & 546.59 & 582.6 & 595.53 & 585.42 & 626.68 & 630.33 & 609.18 & 614.26\\
\implementedNeckCover & 89.17 & 322.59 & 893.37 & 1071.76 & 1120.43 & 1025.46 & 1103.85 & 1177.35 & 1071.12 & 1121.76 & 1150.15\\
\hline
\texttt{Eulertigs} & 82.5 & 205.63 & 535.64 & 684.12 & 728.13 & 820.13 & 817.04 & 769.04 & 848.85 & 831.18 & 860.09\\
\texttt{MaskedSuperstring} & \textbf{22.65} & \textbf{64.41} & \textbf{238.94} & \textbf{220.83} & \textbf{197.62} & \textbf{196.48} & \textbf{198.6} & \textbf{205.27} & \textbf{209.35} & \textbf{216.1} & \textbf{212}\\
\hline
\end{tabular}

\medskip

\textit{H. Sapiens} (3136819154 symbols, read)

\begin{tabular}{lrrrrrrrrrrr}
& $k=11$ & 13 & 15 & 17 & 19 & 21 & 23 & 25 & 27 & 29 & 31\\
\hline
\greedyBaseline & 488.57 & 757.39 & 1835.03 & \xx & \xx & \xx & \xx & \xx & \xx & \xx & \xx\\
\implementedNeckCover & 515.68 & 883.43 & 3268.26 & \xx & \xx & \xx & \xx & \xx & \xx & \xx & \xx\\
\hline
\texttt{Eulertigs} & 484.05 & 754.58 & 1959.49 & \xx & \xx & \xx & \xx & \xx & \xx & \xx & \xx\\
& (\textbf{153.35}) & (272.46) & (1809.6) & (6268) & \xx & \xx & \xx & \xx & \xx & \xx & \xx\\
\texttt{MaskedSuperstring} & 161 & \textbf{236.25} & \textbf{889.51} & \textbf{2331.81} & \textbf{1815.8} & \textbf{1645.45} & \textbf{1607.43} & \textbf{1650.86} & \textbf{1724.66} & \textbf{1701.09} & \textbf{1762.49}\\
\hline
\end{tabular}
\caption{Running time [seconds]\quad(Gray cells: Infeasible due to memory constraints, Parentheses: Running time using bitvector to store dBGs)}
\label{tab:time2}
\end{sidewaystable}

\textbf{Discussion.}
The main insight from these experiments is that using necklaces produced by our greedy Algorithm~\ref{alg:find_tree}, starting with a minimum PC cover 
(yielding \implementedNeckCover), achieves a space reduction comparable to or better than that of the superstring-based method \texttt{MaskedSuperstring}, but without introducing false positives in the SPSS representation. This confirms that the use of necklace structures, combining cycles and open paths with pendants, could provide an effective and exact way to achieve compact representations of $k$-mer sets.

\bibliographystyle{plain}
\bibliography{arxiv-references}

\newpage
\appendix
\section*{APPENDIX}

\section{Additional Pseudocode}
\label{sec:additional-pseudocode}

\begin{algorithm}
\caption{\textsc{FindSubtree}}\label{alg:find_subtree}
\KwIn{path $P$, adjacency list $N$, depth of recursion $d = 0$}
\KwOut{$F\subseteq E$ ($E$ is the edge set of Node-centric de Bruijn graph)}
\If{$d = 0$}{
    $F \gets \emptyset$\;
}
$\mathcal{P}_n \gets \mathcal{P}_n \setminus P$\;
\ForEach{$v_i \in P = (v_1,\ldots,v_p)$}{
    \If{$i \neq p$}{
        $F \gets F \cup (v_i,v_{i+1})$\;
    }
    \ForEach{$u \in N(v_i)$}{
        \If{$\exists P^\prime = (v_1^\prime,\ldots,v_{p^\prime}^\prime)\in \mathcal{P}_n\:\mathrm{s.t.}\:u=v_1^\prime$}{
            $F \gets F\cup(v_i,u)\:\cup\:$\textsc{FindSubtree}$(P^\prime,N)$\;
        }
    }
}
\Return $F$\;
\end{algorithm}

\begin{algorithm}
\caption{\textsc{FindNewCycle}}\label{alg:find_new_cycle}
\KwIn{start path $P_s$, current path $P$, paths $\mathcal{P}$, adjacency list $N$, depth of recursion $d = 0$}
\KwOut{cycle $C$}
\If{$d=0$}{
    $C, \mathcal{P}_C\gets\emptyset$\;
}
\If{$P=P_s \:\mathbf{and}\:d>1$}{
    \ForEach{$P^\prime\in \mathcal{P}_C$}{
        $\mathcal{P}_n \gets \mathcal{P}_n \setminus P^\prime$\;
        $P^\prime \gets P^\prime \setminus C$\ \tcp*{directly modify an element in $\mathcal{P}$}
        $\mathcal{P}_n \gets \mathcal{P}_n \cup P^\prime$\;
    }
    \Return $C$\;
}
$\mathcal{P}_C \gets \mathcal{P}_C \cup P$\;
\ForEach{$v_i\in P=(v_1,...,v_p)$}{
    $C \gets \mathrm{append}(C,v_i)$\;
    \ForEach{$u \in N(v_i)$}{
        \If{$\exists P^\prime=(v_1^\prime,\ldots,v_{p^\prime}^\prime)\in\mathcal{P}_n\:\mathrm{s.t.}\:u=v_1^\prime$}{
            \If{$P^\prime\in\mathcal{P}_C\setminus P_s$}{\textbf{continue}\;}
            $C \gets $\textsc{FindNewCycle}$(P_s,P^\prime,\mathcal{P},N,d+1)$\;
            \If{$C\neq\emptyset$}{\Return $C$\;}
        }
    }
}
\Return $\emptyset$\;
\end{algorithm}

\end{document}